\newcommand{\regsep}[2]{#1\mathbin{|}#2}
\DeclareMathOperator{\supp}{supp}
\newcommand{\Z}{\mathbb{Z}}
\newcommand{\N}{\mathbb{N}}
\newcommand{\bbF}{\mathbb{F}}
\newcommand{\bd}{\bm{d}}
\newcommand{\be}{\bm{e}}
\newcommand{\bmf}{\bm{f}}
\newcommand{\bzero}{\bm{0}}
\newcommand{\bu}{\bm{u}}
\newcommand{\bv}{\bm{v}}
\newcommand{\bp}{\bm{p}}
\newcommand{\bq}{\bm{q}}
\newcommand{\br}{\bm{r}}
\newcommand{\bs}{\bm{s}}
\newcommand{\bx}{\bm{x}}
\newcommand{\by}{\bm{y}}
\newcommand{\bb}{\bm{b}}
\newcommand{\cA}{\mathcal{A}}
\newcommand{\cG}{\mathcal{G}}
\newcommand{\cV}{\mathcal{V}}
\newcommand{\cW}{\mathcal{W}}
\newcommand{\inv}[1]{#1^{-1}}
\newcommand{\restrict}[2]{\ensuremath{\pi_{#2}(#1)}}
\newclass{\NEXPTIME}{NEXPTIME}
\newcommand{\skel}{\operatorname{skel}}
\newcommand{\cycles}{\operatorname{cycles}}
\renewcommand{\Sigma}{\varSigma}
\renewcommand{\Gamma}{\varGamma}
\title{The complexity of separability for semilinear sets and Parikh automata}
\author{Elias Rojas Collins}
{Massachusetts Institute of Technology, Cambridge, MA, USA} 
{erojasc@mit.edu}
{https://orcid.org/0009-0003-3929-1386} 
{} 
\author{Chris K\"{o}cher}
{Max Planck Institute for Software Systems, Kaiserslautern, Germany} 
{ckoecher@mpi-sws.org}
{https://orcid.org/0000-0003-4575-9339} 
{} 
\author{Georg Zetzsche}
{Max Planck Institute for Software Systems, Kaiserslautern, Germany} 
{georg@mpi-sws.org}
{https://orcid.org/0000-0002-6421-4388} 
{} 
\authorrunning{E. Rojas Collins, C. Köcher, and G. Zetzsche}
\keywords{Vector Addition System, Separability, Regular Language}
\newif\ifFullVersion
\begin{document}
\nolinenumbers

\maketitle

\begin{abstract}
	In a \emph{separability problem}, we are given two sets $K$ and $L$
	from a class $\mathcal{C}$, and we want to decide whether there exists
	a set $S$ from a class $\mathcal{S}$ such that $K\subseteq S$ and
	$S\cap L=\emptyset$. In this case, we speak of \emph{separability of
	sets in $\mathcal{C}$ by sets in $\mathcal{S}$}.

	We study two types of separability problems. First, we consider
	separability of semilinear sets (i.e.\ subsets of $\N^d$ for some $d$) by sets definable by quantifier-free
  monadic Presburger formulas (or equivalently, the recognizable subsets
  of $\N^d$). Here, a formula is monadic if each atom uses at most one
  variable. Second, we consider separability of languages of Parikh
	automata by regular languages. A Parikh automaton is a machine with
	access to counters that can only be incremented, and have to meet a
	semilinear constraint at the end of the run. Both of these separability
	problems are known to be decidable with elementary complexity.

	Our main results are that both problems are $\coNP$-complete. In the
	case of semilinear sets, $\coNP$-completeness holds regardless of
	whether the input sets are specified by existential Presburger
	formulas, quantifier-free formulas, or semilinear representations.  Our
	results imply that recognizable separability of rational subsets of
	$\Sigma^*\times\N^d$ (shown decidable by Choffrut and Grigorieff) is
	$\coNP$-complete as well.  Another application is that regularity of
	deterministic Parikh automata (where the target set is
	specified using a quantifier-free Presburger formula) is
	$\coNP$-complete as well.  
\end{abstract} 

\section{Introduction}
\subparagraph{Separability}
In a \emph{separability problem}, we are given two sets $K$ and $L$ from a
class $\mathcal{C}$, and we want to decide whether there exists a set $S$ from
a class $\mathcal{S}$ such that $K\subseteq S$ and $S\cap L=\emptyset$. Here,
the sets in $\mathcal{S}$ are the admissible separators, and $S$ is said to
\emph{separate} the sets $K$ and $L$. In the case where $\mathcal{C}$ is a class of non-regular languages and $\mathcal{S}$ is the class of regular
languages, then the problem is called \emph{regular separability (problem) for $\mathcal{C}$}.
While the problem turned out to be undecidable for context-free languages in the 1970s~\cite{SzymanskiW-sicomp76,DBLP:journals/jacm/Hunt82a}, the last decade saw a significant amount of attention on regular separability for subclasses (or variants) of \emph{vector addition systems with states (VASS)}.
Regular separability was studied for coverability languages of VASS (and, more generally, well-structured transition systems)~\cite{DBLP:conf/concur/CzerwinskiLMMKS18,DBLP:conf/concur/Keskin023,DBLP:conf/fsttcs/KocherZ23}, one-counter automata and one-dimensional VASS~\cite{CzerwinskiL17}, Parikh automata~\cite{RegSepParikh}, commutative VASS languages~\cite{SepReach}, concerning its relationship with the intersection problem~\cite{DBLP:conf/fsttcs/ThinniyamZ19}, B\"{u}chi VASS~\cite{DBLP:conf/icalp/0001K0Z24,DBLP:conf/stacs/00010Z23}, and also for settings where one input language is an arbitrary VASS and the other is from some subclass~\cite{LICS20}. 
Recently, this line of work culminated in the breakthrough result that regular separability for general VASS languages is decidable and Ackermann-complete~\cite{DBLP:conf/lics/Keskin024}.
However, for subclasses of VASS languages, the complexity landscape is far from understood.

\subparagraph{Separating Parikh automata} An important example of such a subclass is the
class of languages accepted by \emph{Parikh automata}, which are
non-deterministic automata equipped with counters that can only be incremented.
Here, a run is accepting if the final counter values belong to a particular
semilinear set. Languages of Parikh automata have received significant
attention over many
decades~\cite{ibarra1978reversal,greibach1978remarks,FOSSACS23,DBLP:journals/iandc/JantzenK03,DBLP:journals/ita/CadilhacFM12,DBLP:conf/icalp/BostanCKN20,DBLP:journals/ijfcs/CadilhacFM13,DBLP:conf/mfcs/FinkelS08,DBLP:conf/frocos/BersaniD11,DBLP:conf/icalp/Zetzsche13,DBLP:journals/mst/CadilhacKM18,DBLP:conf/icalp/Zetzsche16},
including a lot of work in recent
years~\cite{DBLP:conf/concur/ErlichGJL023,DBLP:conf/fsttcs/GuhaJL022,DBLP:conf/fsttcs/FiliotGM19,DBLP:conf/csl/GroblerSS24,DBLP:conf/mfcs/CadilhacG0R23,DBLP:journals/corr/abs-2505-13749}.
This is because they are expressive enough to model non-trivial counting
behavior, but still enjoy low complexity for many algorithmic tasks (e.g.\ the
emptiness problem is $\coNP$-complete).  Example applications are monadic
second-order logic with cardinalities~\cite{DBLP:conf/icalp/KlaedtkeR03} (this
paper introduced the specific model of Parikh automata), solving subword
constraints~\cite{DBLP:conf/lics/HalfonSZ17}, and model-checking FIFO channel systems~\cite{DBLP:journals/tcs/BouajjaniH99}. Moreover, these languages have
other equivalent characterizations, such as reversal-bounded counter
automata---a classic (and intensely studied) type of infinite-state systems
with nice decidability
properties~\cite{ibarra1978reversal,DBLP:conf/frocos/BersaniD11}---and automata
with $\Z$-counters, also called
$\Z$-VASS~\cite{greibach1978remarks,DBLP:conf/rp/HaaseH14}\footnote{See
\cite{FOSSACS23} for efficient translation among Parikh automata,
reversal-bounded counter automata, and $\Z$-VASS.}. 

Decidability of regular separability was shown by Clemente, Czerwi\'{n}ski,
Lasota, and Paperman~\cite{RegSepParikh} in 2017 as one of the first
decidability results for regular separability. Moreover, this result was a key ingredient in Keskin and
Meyer's algorithm to decide regular separability for general
VASS~\cite{DBLP:conf/lics/Keskin024}. However, despite the strong interest in
Parikh automata and in regular separability, the complexity of this problem
remained unknown. In \cite[Section 7]{RegSepParikh}, the authors provide an
elementary complexity upper bound.

\subparagraph{Separating semilinear sets: Monadic interpolants}
One of the steps in the algorithm from \cite{RegSepParikh} is to decide separability of sets defined in Presburger arithmetic, the first-order theory of $(\N;+,\le,0,1)$.
Separators of logically defined sets can also be viewed as \emph{interpolants}. If $\varphi(\bm{x},\bm{y})$ and $\psi(\bm{y},\bm{z})$ are (first-order or propositional) formulas such that $\forall \bm{x} \forall \bm{y} \forall \bm{z} \, (\varphi(\bm{x},\bm{y}) \to \psi(\bm{y},\bm{z}))$ holds, then a formula $\chi(\bm{y})$ is a \emph{Craig interpolant} if $\forall \bm{x} \forall \bm{y} \, (\varphi(\bm{x},\bm{y}) \to \chi(\bm{y}))$ and $\forall \bm{y} \forall \bm{z} \, (\chi(\bm{y}) \to \psi(\bm{y},\bm{z}))$ both hold. Here, $\bm{x},\bm{y},\bm{z}$ are each a vector of variables, meaning $\chi$ only mentions variables that
occur both in $\varphi$ and $\psi$. Equivalently, the set defined by $\chi$ is
a separator of the sets defined by the existential formulas $\exists
\bm{x}\colon\varphi(\bm{x},\bm{y})$ and
$\exists\bm{z}\colon\neg\psi(\bm{y},\bm{z})$. In Interpolation-Based Model
Checking (ITP)~\cite{DBLP:conf/cav/McMillan03,DBLP:journals/pieee/VizelWM15},
Craig interpolants are used to safely overapproximate sets of states: If
$\varphi$ describes reachable states and $\psi$ describes the set of safe
states, then $\chi$ overapproximates $\varphi$ without adding unsafe states.
Note that in Presburger logic there are implications that do not have a Craig
interpolant (this is in contrast to propositional logic). So, before constructing
an interpolant, a first step of ITP is to decide whether there even exists
such an interpolant.

In the case of Presburger arithmetic, the definable sets are the
\emph{semilinear sets}. For many infinite-state systems, the step relation (or
even the reachability relation) is semilinear, and thus, separators can play
the role of Craig interpolants in infinite-state model checking. For the
separators, a natural choice is the class of \emph{recognizable sets}, which
are those defined by \emph{monadic} Presburger formulas, meaning each atom
refers to at most one variable.  Monadic formulas have recently received
attention~\cite{DBLP:journals/jacm/VeanesBNB17,DBLP:journals/pacmpl/BergstrasserGLZ24,DBLP:conf/cade/HagueLRW20,DBLP:conf/icalp/HaaseKMMZ24}
because of their applications in query optimization in constraint
databases~\cite{DBLP:conf/gis/GrumbachRS98,kuper2013constraint} and symbolic
automata~\cite{DBLP:journals/jacm/VeanesBNB17}. 
Thus, deciding recognizable separability of semilinear sets can be viewed as
synthesizing monadic Craig interpolants. 

Recognizable separability was shown decidable by Choffrut and
Grigorieff~\cite{ChoG06} (see~\cite{SepReach} for an extension beyond
semilinear sets). This was a key ingredient for separability of Parikh
automata in \cite{RegSepParikh}. Choffrut and Grigorieff's algorithm has
elementary complexity~\cite[Section 7]{RegSepParikh}, but the exact complexity
of recognizable separability of semilinear sets remained unknown.

\subparagraph{Contribution}
Our \emph{first main result} is that for given existential Presburger formulas,
recognizable separability (i.e.\ monadic separability) is $\coNP$-complete.  In
particular, re\-cog\-niz\-able separability is $\coNP$-complete for given semilinear
representations. Moreover, our result implies that recognizable separability is
$\coNP$-complete for rational subsets of monoids $\Sigma^*\times\N^d$ as
considered by Choffrut and Grigorieff~\cite{ChoG06}.  Building on the methods
of the first result, our \emph{second main result} is that regular separability
for Parikh automata is $\coNP$-complete.

\subparagraph{Application I: Monadic decomposability} Our first main result strengthens a
recent result on monadic decomposability. A formula in Presburger arithmetic is
\emph{monadically decomposable} if it has a monadic equivalent.  It was shown
recently that (i)~deciding whether a given quantifier-free formula is
\emph{monadically decomposable} (i.e.\ whether it has a monadic equivalent) is
$\coNP$-complete~\cite[Theorem 1]{DBLP:conf/cade/HagueLRW20} (see 
\cite[Corollary 8.1]{DBLP:journals/pacmpl/BergstrasserGLZ24} for an alternative proof; and see \cite[Proposition 3]{DBLP:conf/fossacs/ChistikovHM22} for improved bounds for the approach in \cite{DBLP:conf/cade/HagueLRW20}), whereas (ii)~for
existential formulas, the problem is $\coNEXP$-complete~\cite[Corollary 3.6]{DBLP:conf/icalp/HaaseKMMZ24}.  Our first main
result strengthens (i): If $\varphi(\bm{x})$ is a quantifier-free formula, then
the sets defined by $\varphi(\bm{x})$ and $\neg\varphi(\bx)$ are separable by a
monadic formula if and only if $\varphi(\bx)$ is monadically decomposable.
Perhaps surprisingly, our $\coNP$ upper bound still holds for existential
Presburger formulas, for which monadic decomposability is known to be 
$\coNEXP$-complete\footnote{This is not a contradiction to the above reduction
from monadic decomposability to recognizable separation, since this reduction
would require complementing an existential formula.}.

\subparagraph{Application II: Regularity of Parikh automata} Another
consequence of our results is that regularity of deterministic Parikh automata,
i.e.\ deciding whether a given deterministic Parikh automaton accepts a regular
language, is $\coNP$-complete: Given a deterministic Parikh
automaton for a language $L\subseteq\Sigma^*$, one can construct in polynomial
time a Parikh automaton for $K=\Sigma^*\setminus L$.  Then, $L$ is regular if
and only if $L$ and $K$ are regularly separable.
Here, we assume that the semilinear target set
is given as a quantifier-free Presburger formula. Decidability of this problem
has been shown by Cadilhac, Finkel, and McKenzie~\cite[Theorem
25]{DBLP:journals/ijfcs/CadilhacFM13} (even in the more general case of
unambiguous constrained automata).

\subparagraph{Key ingredients} 
The existing elementary-complexity algorithm for recognizable separability of
semilinear sets works with semilinear representations and distinguishes two
cases: If in one component $j$, one of the input sets $S_1,S_2\subseteq\N^d$ is
bounded by some $b\ge 0$, then it considers each $x\in[0,b]$ and recursively
decides separability of $S_1[j\mapsto x]$ and $S_2[j\mapsto x]$, where $S_i[j\mapsto x]$ is just $S_i$
restricted to having $x$ in this bounded component. If, however, all components in both
sets are unbounded, then it checks feasibility of a system of linear
Diophantine equations. This approach leads to repeated intersection of
semilinear sets, and thus exponential time. We provide a characterization
(\cref{characterization-recognizable-separability}) that describes
inseparability directly as the non-empty intersection of two semilinear sets
$\hat{S}_1,\hat{S}_2\subseteq\N^d$ associated with $S_1,S_2$.  This easily
yields an $\NP$ procedure for inseparability, even if the input sets are given
as existential Presburger formulas.

This characterization is then the first key ingredient for deciding regular
separability of Parikh automata in $\coNP$. This is because in
\cite{RegSepParikh}, it is shown that, after some preprocessing, the languages
of Parikh automata $\cA_1$ and $\cA_2$ are separable if and only if two
semilinear sets $C_1,C_2\subseteq\N^d$ associated with $\cA_1$ and $\cA_2$ are
separable by a recognizable set. These semilinear sets consist of vectors, each
of which counts for some run of $\cA_i$, how many times each simple cycles
occurs in this run. Thus, our first result tells us that it suffices to decide whether $\hat{C}_1$ and $\hat{C}_2$ are disjoint. Unfortunately, the vectors of $C_1,C_2$ have exponential dimension
$d$, since there are exponentially many simple cycles in each $\cA_i$.  Thus,
applying our first result directly using existential Presburger arithmetic would only yield a $\coNEXP$ upper bound.

To avoid this blowup, the second key idea is to \emph{encode the vectors in $\hat{C}_1$ and $\hat{C}_2$ as words}, where the cycle occurrences appear as a concatenation in some order. By constructing $\Z$-VASS $\cW_1,\cW_2$ for the encodings of the vectors in $\hat{C}_1,\hat{C}_2$, we reduce separability to intersection emptiness of $\cW_1$ and $\cW_2$. The latter, in turn, easily reduces to non-reachability in a product $\Z$-VASS, which is in $\coNP$.

\section{Preliminaries}
By $\N=\{0,1,2,\ldots\}$ we denote the set of all non-negative integers.
Let $d\in\N$ be a number and $I\subseteq[1,d]$ be a set of indices. By $\pi_I\colon\N^d\to\N^I$ we denote the \emph{projection} of vectors in $\N^d$ to vectors in $\N^I$, i.e., $\restrict{\bv}{I}[i]=\bv[i]$ for each $\bv\in\N^d$ and $i\in I$. The \emph{support} of a vector $\bv\in\N^d$ is the set of all coordinates in $\bv$ with non-zero value, i.e.\ $\supp(\bv)=\{i\in[1,d]\mid\bv[i]\neq0\}$.

\subparagraph{Semilinear sets}
A set $S\subseteq\N^d$ is \emph{linear} if there is a vector
$\bu\in\N^d$ and a finite set $P\subseteq\N^d$ of so-called \emph{periods} such
that $S=\bu+P^*$ holds. Here, for $P=\{\bu_1,\ldots,\bu_n\}$, the set $P^*$ is defined as
$ P^*=\{\lambda_1\bu_1+\cdots+\lambda_n\bu_n\mid \lambda_1,\ldots,\lambda_n\in\N\}. $
A subset $S\subseteq\N^d$ is called \emph{semilinear} if it is a finite union
of linear sets. In case we specify $S$ by way of a finite union of linear sets, then we call this description a \emph{semilinear representation}. The set $S\subseteq\N^d$ is called \emph{hyperlinear} if there are finite sets
$B,P\subseteq\N^d$ such that $S=B+P^*$ holds.  It is well known that the
semilinear sets are precisely those definable in \emph{Presburger
arithmetic}~\cite{GinS66a}, the first-order theory of the
structure $(\N;+,\le,0,1,(\equiv_m)_{m\in\N\setminus\{0\}})$. Here $\equiv_m$ is the
predicate where $x\equiv_m y$ if and only if $x-y$ is divisible by $m$. By
quantifier elimination, every formula in Presburger
arithmetic has a quantifier-free equivalent.

\subparagraph{Parikh automata} Intuitively, a Parikh automaton has finitely
many control states and access to $d\ge 0$ counters. Upon reading a letter (or
the empty word), it can add a vector $\bu\in\N^d$ to its counters. Moreover,
for each state $q\in Q$, it specifies a target set $C_q\subseteq\N^d$. An input
word is accepted if at the end of the run, the accumulated counter values
belong to $C_q$, where $q$ is the state at the end of the run. Formally, a
\emph{Parikh automaton} is a tuple $\cA=(Q,\Sigma,T,q_0,(C_q)_{q\in Q})$, where
$Q$ is a finite set of states, $T\subseteq
Q\times(\Sigma\cup\{\varepsilon\})\times \N^d\times Q$ is its finite set of
\emph{transitions}, $q_0\in Q$ is the \emph{initial} state, and $C_q\subseteq\N^d$
is the \emph{target set} in state
$q$, for each $q\in Q$. For an input word $w\in\Sigma^*$, a \emph{run on $w$} is
a sequence $(q_0,w_1,\bu_1,q_1)\cdots (q_{n-1},w_n,\bu_n,q_n)$ of transitions
in $T$ with $w=w_1\cdots w_n$. The run is \emph{accepting} if $\bu_1+\cdots+\bu_n\in C_{q_n}$. The
\emph{language} of $\cA$ is then the set of all words $w\in\Sigma^*$ such that
$\cA$ has an accepting run on $w$. 

\begin{remark}
	For our results on general Parikh automata, we assume that the target sets are
	specified using existential Presburger formulas. However, this is not
	an important aspect: Given a Parikh automaton, one can in polynomial
	time modify the automaton (and the target set) so that the target set
	is given, e.g.\ by a semilinear representation, or a quantifier-free
	Presburger formula. This is a simple consequence of the fact that one
	can translate Parikh automata into integer VASS in logarithmic
	space~\cite[Corollary 1]{FOSSACS23}.
	However, this conversion does not preserve determinism, and for
	deterministic Parikh automata, it can be important how target sets are
	given (see \cref{main-regularity} and the discussion after it).
	Therefore, for deterministic Parikh automata, we always specify
	how the targets sets are given.
\end{remark}

\subparagraph{Separability}
A subset $L\subseteq M$ of a monoid $M$ is \emph{recognizable} if there is a
morphism $\varphi\colon M\to F$ into some finite monoid $F$ such that
$\varphi^{-1}(\varphi(L))=L$.  The recognizable subsets of $M$ form a Boolean
algebra~\cite[Chapter III, Prop.~1.1]{Berstel}. We say that sets $K,L\subseteq
M$ are \emph{(recognizably) separable}, denoted $\regsep{K}{L}$, if there is a
morphism $\varphi\colon M\to F$ into some finite monoid $F$ such that
$\varphi(K)\cap\varphi(L)=\emptyset$. Equivalently, we have $\regsep{K}{L}$ if and
only if there is a recognizable $S\subseteq M$ with $K\subseteq S$ and
$S\cap L=\emptyset$.  Here, $S$ is called a \emph{separator} of $K$ and $L$. Clearly, we have $\regsep{K}{L}$ if and only if $\regsep{L}{K}$: if $S$ is a separator of $K$ and $L$ then $M\setminus S$ separates $L$ and $K$.

In the case $M=\Sigma^*$ for some alphabet $\Sigma$, the recognizable sets
in $\Sigma^*$ are exactly the regular languages (cf.\
\cite[Theorem~II.2.1]{Sak09}), and thus we speak of \emph{regular
separability}. In the case $M=\N^d$ for some $d\ge 0$, then the recognizable
subsets of $\N^d$ are precisely the finite unions of cartesian products
$U_1\times\cdots\times U_d$, where each $U_i\subseteq\N$ is ultimately
periodic~\cite[Theorem 5.1]{Berstel}. Here, a set $U\subseteq\N$ is
\emph{ultimately periodic} if there are $n_0,p\in\N\setminus\{0\}$ such that for all $n\ge
n_0$, we have $n\in U$ if and only if $n+p\in U$. This implies that the
recognizable subsets of $\N^d$ are precisely those definable by a \emph{monadic
Presburger formula}, i.e.\ one where every atom only refers to one
variable~\cite{DBLP:journals/jacm/VeanesBNB17}.  For these reasons, in the case
of $M=\N^d$, we also sometimes speak of \emph{monadic separability}.

In a \emph{recognizable separability problem}, we are given two subsets $K$ and $L$ from a monoid $M$ as input, and we want to decide whether $K$ and $L$ are recognizably separable. Again, in the case of $M=\Sigma^*$, we also call this the \emph{regular separability problem}.

\section{Main results}
\subparagraph{Recognizable separability of semilinear sets}
Our first main result is the following.
\begin{theorem}\label{main-epa}
Given two semilinear sets defined by existential Presburger formulas, recognizable separability is $\coNP$-complete.
\end{theorem}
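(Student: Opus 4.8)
The plan is to prove $\coNP$-hardness and membership in $\coNP$ separately. For hardness, I would reduce from a standard $\coNP$-complete problem; the discussion in the excerpt already suggests that monadic decomposability of quantifier-free formulas is $\coNP$-complete \cite{DBLP:conf/cade/HagueLRW20}, and the excerpt explicitly notes that for a quantifier-free $\varphi(\bx)$, the sets defined by $\varphi$ and $\neg\varphi$ are monadically separable if and only if $\varphi$ is monadically decomposable. So the cleanest route is: given a quantifier-free $\varphi(\bx)$, output the two existential Presburger formulas $\varphi(\bx)$ and $\neg\varphi(\bx)$ (the latter is still quantifier-free, hence trivially existential), and invoke that equivalence. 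This is a polynomial-time many-one reduction, establishing $\coNP$-hardness. Alternatively, one could give a direct reduction from, say, the complement of SUBSET-SUM or from validity of a DNF, encoding the instance so that separability fails exactly when the formula is valid; but piggybacking on the known decomposability result is shorter and is clearly the intended argument.

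The substantial direction is membership in $\coNP$, i.e.\ showing that \emph{inseparability} is in $\NP$. Here the key ingredient is the promised characterization, \cref{characterization-recognizable-separability}, which (per the introduction) expresses inseparability of semilinear $S_1,S_2\subseteq\N^k$ as non-emptiness of $\hat S_1\cap\hat S_2$ for two semilinear sets $\hat S_1,\hat S_2$ effectively associated with $S_1,S_2$. So the $\NP$ algorithm for inseparability is: (1) from the input existential Presburger formulas for $S_1,S_2$, construct existential Presburger formulas (or small integer-VASS descriptions) for $\hat S_1$ and $\hat S_2$ in polynomial time; (2) form the conjunction defining $\hat S_1\cap\hat S_2$; (3) test satisfiability of the resulting existential Presburger sentence, which is in $\NP$ \cite{DBLP:conf/rp/HaaseH14}. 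The work is in step (1): one must check that the construction of $\hat S_i$ from the excerpt's characterization does not blow up when the input is given by an existential formula rather than a semilinear representation — in particular that $\hat S_i$ is itself definable by a polynomial-size existential formula. I expect $\hat S_i$ to be built by taking the projection/closure of $S_i$ under adding arbitrary vectors supported on the ``unbounded'' index sets, which is readily expressible with a few extra existentially quantified variables, so this should go through.

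The main obstacle, then, is establishing \cref{characterization-recognizable-separability} itself: proving that recognizable inseparability of $S_1,S_2$ is \emph{exactly} $\hat S_1\cap\hat S_2\neq\emptyset$. The ``easy'' direction is that a common point of $\hat S_1$ and $\hat S_2$ obstructs every recognizable separator — because a recognizable (i.e.\ finite union of products of ultimately periodic sets) separator containing $S_1$ must, on each component where $S_1$ is unbounded, contain an entire ultimately periodic tail, hence be closed (up to the eventual period) under the period vectors one adds to form $\hat S_1$; symmetrically for $S_2$, forcing the separator to meet $S_2$. The converse — that when $\hat S_1\cap\hat S_2=\emptyset$ one can actually \emph{build} a recognizable separator — is the delicate part and presumably requires a careful componentwise induction (splitting on bounded components, as in the old elementary algorithm, but now extracting from $\hat S_1\cap\hat S_2=\emptyset$ a uniform bound and an explicit ultimately periodic separator in each coordinate). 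I would structure the proof of that lemma as: reduce to the case where every component is either bounded in $S_1\cup S_2$ or unbounded in both; in the fully unbounded case use a Dickson/linear-algebra argument over the period vectors to separate by residues modulo a common modulus; then lift through the bounded components by a finite case distinction. Once that lemma is in hand, Theorem \ref{main-epa} follows immediately from the two paragraphs above.
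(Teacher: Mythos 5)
The $\coNP$-hardness part of your proposal is fine (reducing monadic decomposability of a quantifier-free $\varphi$ to separability of $\varphi$ and $\neg\varphi$ works; the paper uses the even simpler reduction from unsatisfiability of an existential formula, via $\regsep{K}{\N^d}$ iff $K=\emptyset$). The gap is in the upper bound, where you black-box the characterization and then guess its content incorrectly: you describe $\hat S_i$ as ``the closure of $S_i$ under adding arbitrary vectors supported on the unbounded index sets'' and assert this is readily expressible. With that definition the characterization is false: take $S_1=\{2m\mid m\in\N\}$ and $S_2=\{2m+1\mid m\in\N\}$ in dimension $1$; both are unbounded in the only coordinate, so your closures are cofinite and intersect, yet $S_1,S_2$ are separated by the recognizable set of even numbers. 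The correct sets are $\hat S_1=A+U^*+V_J^*$ and $\hat S_2=B+V^*+U_J^*$ (equivalently $A+U^*-U_J^*$ and $B+V^*-V_J^*$), where $U,V$ are period sets of hyperlinear representations and $J$ is the set of \emph{strongly unbounded} coordinates: one adds only the periods of the \emph{other} set restricted to $J$, and it is precisely the modular information carried by these periods that makes \cref{characterization-recognizable-separability,characterization-recognizable-separability2} an equivalence.

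Even granting the correct characterization, your step (1) skips the two places where the actual work happens. First, the characterization applies to hyperlinear sets, so from an existential formula one must first guess (this is the $\NP$ step) a conjunction of atoms that makes the Boolean structure true — avoiding the exponential DNF expansion — and introduce slack variables, so that each input becomes $\pi(\{\bx\in\N^r\mid A\bx=\bb\})$. Second, $J$ is not available from the input and is not obviously computable in polynomial time; the paper never computes it, but instead existentially quantifies homogeneous solutions $\bp,\bq$ with $A\bp=\bzero$, $C\bq=\bzero$ and $\supp(\pi(\bp))=\supp(\pi(\bq))$, and requires the added period vectors to have support inside $\supp(\pi(\bp))$ (\cref{inseparability-via-inequalities}); soundness uses monotonicity in $J$ and completeness uses that support-matching witnesses can be summed. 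Neither ingredient appears in your sketch. Finally, your plan for the hard direction of the characterization (build a separator by recursion on bounded components, as in the old elementary algorithm) is not what the paper does — it deduces an intersection point from the $\sim_k$-criterion of \cref{lem:insepSim} via Dickson's lemma and subgroup separability, and never constructs a separator — and you leave that lemma entirely as an outline, so the substantive content of the theorem is missing from the proposal.
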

The lower bound follows with a simple reduction from the emptiness problem for sets defined by existential Presburger formulas: If $\varphi$ defines a subset $K\subseteq\N^d$, then $\regsep{K}{\N^d}$ if and only if $K$ is empty. We prove the $\coNP$ upper bound in \cref{section-epa}. By the same argument, recognizable separability is $\coNP$-hard for input sets given by quantifier-free formulas. Thus:
\begin{corollary}\label{main-quantifier-free}
Given two semilinear sets defined by quantifier-free Presburger formulas, recognizable separability is $\coNP$-complete.
\end{corollary}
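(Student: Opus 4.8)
The plan is to derive this immediately from \cref{main-epa}, since a quantifier-free Presburger formula is in particular an existential Presburger formula (with zero existentially quantified variables). For the upper bound, given two sets $K,L\subseteq\N^d$ defined by quantifier-free formulas $\varphi_K,\varphi_L$, we simply view them as existential formulas and invoke the $\coNP$ procedure of \cref{main-epa}; no blow-up occurs because the identity transformation takes a quantifier-free formula to a (trivially) existential one in constant time. Hence recognizable separability for quantifier-free inputs is in $\coNP$.

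For the matching lower bound, I would reuse the reduction sketched just after \cref{main-epa}: given a quantifier-free formula $\psi$ defining a set $K\subseteq\N^d$, we have $\regsep{K}{\N^d}$ if and only if $K=\emptyset$, because the only recognizable set disjoint from $\N^d$ is the empty set, and $K\subseteq\emptyset$ forces $K=\emptyset$. Thus recognizable separability of $K$ and $\N^d$ is equivalent to emptiness of $K$. It therefore suffices to note that the emptiness problem for sets defined by quantifier-free Presburger formulas is $\NP$-hard (equivalently, satisfiability of quantifier-free Presburger formulas is $\NP$-hard — indeed this is classical, and already follows from the $\NP$-hardness of integer programming / satisfiability of systems of linear inequalities, or even from a direct reduction from $\mathsf{3SAT}$ encoding Boolean variables as $\{0,1\}$-valued integer variables). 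Since $\N^d$ is itself defined by a quantifier-free formula (e.g.\ a trivially true atom), the instance $(K,\N^d)$ is a legitimate instance of recognizable separability for quantifier-free inputs, so the problem is $\coNP$-hard.

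Combining the two bounds gives $\coNP$-completeness. There is essentially no obstacle here: the only point worth stating carefully is that the $\coNP$-hardness of \cref{main-epa} as written is phrased via existential formulas, so one must observe that the hard instances produced by the reduction from emptiness can already be taken to be quantifier-free (which they can, since emptiness of quantifier-free Presburger formulas is itself $\NP$-hard). Everything else is a one-line appeal to \cref{main-epa} and the closure of the class of quantifier-free formulas under the trivial embedding into existential formulas.
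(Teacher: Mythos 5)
Your proposal is correct and matches the paper's own argument: the upper bound is inherited from \cref{main-epa} because a quantifier-free formula is a special case of an existential one, and the lower bound reuses the reduction $\regsep{K}{\N^d}$ iff $K=\emptyset$, noting that emptiness (satisfiability) for quantifier-free Presburger formulas is already $\NP$-hard. Nothing to add.
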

In particular, this re-proves the $\coNP$ upper bound for monadic decomposability of quantifier-free formulas, as originally shown by Hague, Lin, R\"{u}mmer, and Wu~\cite[Theorem~1]{DBLP:conf/cade/HagueLRW20}. 
\begin{remark}
	Our result also implies that for existential Presburger formulas over
	$(\Z;+,\le,0,1,(\equiv_m)_{m\in\N\setminus\{0\}})$ defining $K,L\subseteq\Z^d$, it is $\coNP$-complete to
	decide whether they are separable by a monadically defined subset of
	$\Z^d$. Indeed, consider the injective map $\nu\colon\Z^d\to\N^{2d}$, where $\nu(x_1,\ldots,x_d)=(\sigma(x_1),|x_1|,\ldots,\sigma(x_d),|x_d|)$ with $\sigma(x)=0$ for $x\ge 0$ and $\sigma(x)=1$ for $x<0$. Then $S\subseteq\Z^d$ is monadically definable if and only if $\nu(S)$ is monadically definable\footnote{This is easily shown by translating each atomic formula (over a single variable) into a monadic formula in each direction. However, note that within $\Z^d$, monadic definability is not the same as recognizability. For example, the sets $\{0\}$ and $\Z\setminus\{0\}$ are monadically separable, but not separable by a recognizable subset of $\Z$, since every non-empty recognizable subset of $\Z$ is infinite~\cite[Chapter III, Example 1.4]{Berstel}.}. Thus, $K,L\subseteq\Z^d$ are monadically separable if and only if $\nu(K),\nu(L)\subseteq\N^{2d}$ are monadically separable. Finally, one easily constructs existential formulas for $\nu(K),\nu(L)$.
\end{remark}

Since for a given semilinear representation of a set $S\subseteq\N^d$, it is
easy to construct an existential Presburger formula defining $S$,
\cref{main-epa} also implies the following.
\begin{corollary}\label{main-semilinear}
Given two semilinear representations, recognizable separability is $\coNP$-complete.
\end{corollary}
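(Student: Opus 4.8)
The plan is to obtain the upper bound from \cref{main-epa} and the lower bound from a separate combinatorial reduction. For the upper bound, I would simply translate semilinear representations into existential Presburger formulas: given $S=\bigcup_{i=1}^{k}(\bu_i+P_i^*)$ with $P_i=\{\bp_{i,1},\ldots,\bp_{i,n_i}\}$, the formula
\[
  \varphi_S(\bx)\;=\;\bigvee_{i=1}^{k}\ \exists\lambda_{i,1},\ldots,\lambda_{i,n_i}\colon\ \Bigl(\bx=\bu_i+\textstyle\sum_{j=1}^{n_i}\lambda_{i,j}\,\bp_{i,j}\Bigr)
\]
defines $S$ and has size polynomial in that of the representation (the constants, in binary, are just copied over). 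Applying this to both input sets and invoking \cref{main-epa} gives the $\coNP$ upper bound. Note that the $\coNP$-hardness argument behind \cref{main-epa} does \emph{not} transfer directly, since it reduces from emptiness of existential formulas, whereas emptiness of a semilinear representation is trivially decidable (a union of linear sets is empty iff it lists no linear set at all); so the lower bound needs a fresh reduction.

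For the lower bound I would reduce from \emph{exact cover by $3$-sets} (X3C), which is $\NP$-complete, to non-separability of semilinear representations. Given a universe $X=\{1,\ldots,3q\}$ and $3$-element sets $T_1,\ldots,T_n\subseteq X$, let $\ba_\ell\in\{0,1\}^{3q}\subseteq\N^{3q}$ be the characteristic vector of $T_\ell$, take $S_1=\{\bone\}$ (the linear set with base $\bone$ and empty period set) and $S_2=\{\ba_1,\ldots,\ba_n\}^*$ (the linear set with base $\bzero$ and periods $\ba_1,\ldots,\ba_n$); both are semilinear representations of size polynomial in the instance. Since $S_1$ is a singleton it is a recognizable subset of $\N^{3q}$ and can serve as its own separator whenever it is disjoint from $S_2$, so $\regsep{S_1}{S_2}$ holds iff $\bone\notin\{\ba_1,\ldots,\ba_n\}^*$. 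Finally, $\bone\in\{\ba_1,\ldots,\ba_n\}^*$ exactly when the X3C instance has a solution: a witnessing combination $\sum_\ell\lambda_\ell\ba_\ell=\bone$ forces every $\lambda_\ell\in\{0,1\}$ (since $\sum_{\ell\colon i\in T_\ell}\lambda_\ell=1$ in each coordinate $i$), and then $\{T_\ell\mid\lambda_\ell=1\}$ is an exact cover; the converse is immediate. Hence $\regsep{S_1}{S_2}$ holds iff the X3C instance has no solution, which yields $\coNP$-hardness.

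Combining the two bounds proves the claim. The upper bound is a routine syntactic translation that imports everything from \cref{main-epa}; the only point needing care is in the lower bound, namely checking that the two constructed objects are genuine semilinear \emph{representations} of polynomial size (each is a single linear set with explicitly listed base and periods) and that recognizable separability collapses to disjointness on these instances (because all finite subsets of $\N^d$ are recognizable). No small-model property and none of the characterization machinery from the later sections is needed here.
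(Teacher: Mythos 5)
Your proposal is correct, and the upper bound is exactly the paper's argument: translate each semilinear representation into an existential Presburger formula of polynomial size and invoke \cref{main-epa}. For the lower bound you take a mildly different route. The paper reduces from the $\NP$-hard \emph{membership} problem for semilinear sets, using the observation that $\bu\notin S$ if and only if $\regsep{S}{\{\bu\}}$; you instead build a concrete hard instance from X3C, with $S_1=\{\bone\}$ and $S_2=\{\ba_1,\ldots,\ba_n\}^*$, and argue $\regsep{S_1}{S_2}$ iff $\bone\notin S_2$ iff the X3C instance has no exact cover. Both arguments hinge on the same collapse—against a singleton (a recognizable set), recognizable separability is just disjointness—so your reduction is in effect a self-contained proof of the $\NP$-hardness of semilinear membership that the paper cites as known. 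What your version buys is independence from that citation (and your coordinate-sum argument forcing $\lambda_\ell\in\{0,1\}$ is sound, even with unary constants); what the paper's version buys is brevity and the explicit link to the membership problem. You are also right that the emptiness-based hardness argument behind \cref{main-epa} does not transfer, since emptiness of a semilinear representation is trivial, so a separate lower-bound argument is indeed needed here.
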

In this case, the $\coNP$ lower bound comes from the $\NP$-hard membership
problem for semilinear sets (even if all numbers are written in unary) \cite[Lemma~10]{IbaR14}: For a
semilinear subset $S\subseteq\N^d$ and a vector $\bu\in\N^d$, we have
$\bu\notin S$ if and only if $\regsep{S}{\{\bu\}}$.
Finally, \cref{main-epa} allows us to settle the complexity of
recognizable separability of rational subsets of $\Sigma^*\times\N^d$.
\begin{corollary}
Given $d\in\N$ and two rational subsets of $\Sigma^*\times\N^d$, deciding recognizable separability is $\coNP$-complete.
\end{corollary}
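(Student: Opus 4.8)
The plan is to reduce this problem to \cref{main-epa} in both directions. For the upper bound, I would first recall the standard fact that a rational subset $R$ of $\Sigma^*\times\N^m$ can be presented by a nondeterministic finite automaton over the monoid $\Sigma^*\times\N^m$ (equivalently, a Parikh-type automaton): a finite graph whose edges are labelled by generators $(\sigma,\bzero)$ and $(\varepsilon,\be_i)$. Given such a presentation, one reads off in polynomial time an existential Presburger formula describing, for each fixed target state and each ``Parikh image modulo the word component'', the set of reachable $\N^m$-vectors — but that is not quite enough, since separators live in $\Sigma^*\times\N^m$, not in $\N^m$. The key reduction step is therefore to observe that recognizable separability in $\Sigma^*\times\N^m$ decomposes over the $\Sigma^*$-component: since recognizable subsets of a product of monoids are exactly finite unions of products of recognizable subsets of the factors (\cite[Chapter III]{Berstel}), $K$ and $L$ are recognizably separable in $\Sigma^*\times\N^m$ if and only if one can partition $\Sigma^*$ into finitely many regular classes $R_1,\dots,R_t$ such that for each $j$, the ``fibers'' $K_j=\{\bv\in\N^m : (w,\bv)\in K \text{ for some } w\in R_j\}$ and $L_j$ analogously are recognizably separable in $\N^m$.

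The finiteness of the index $t$ is not a priori bounded, so I would instead route through the known decidability proof of Choffrut–Grigorieff~\cite{ChoG06}, or more directly reprove the following clean statement: $K$ and $L$ are \emph{in}separable in $\Sigma^*\times\N^m$ if and only if the semilinear sets $\hat{K},\hat{L}\subseteq\N^{m'}$ obtained by encoding the $\Sigma^*$-component via its abelianization (Parikh image over $\Sigma$) together with the $\N^m$-counters — this works because, for regular separability purposes over $\Sigma^*$, the relevant invariant is exactly the commutative/ultimately-periodic structure, and a rational subset of $\Sigma^*\times\N^m$ has a semilinear Parikh image that is computable by an existential Presburger formula of polynomial size. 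Concretely: translate each rational subset into a Parikh automaton whose counters track both the $\N^m$-part and, say, one counter per letter of $\Sigma$; apply the reduction implicit in the Parikh-automata part of the paper (or simply note $\regsep{K}{L}$ over $\Sigma^*\times\N^m$ reduces to $\regsep{K'}{L'}$ over $\N^{m+|\Sigma|+1}$ via a length-and-letter-count encoding), obtaining existential Presburger formulas of polynomial size; then invoke \cref{main-epa}. For the lower bound, embed $\N^m$ into $\Sigma^*\times\N^m$ by taking $\Sigma=\emptyset$ (or a singleton alphabet with the word component frozen to $\varepsilon$): then recognizable separability of rational subsets of $\{\varepsilon\}\times\N^m$ is literally recognizable separability of semilinear subsets of $\N^m$, and rational subsets of $\N^m$ are exactly the semilinear ones, so $\coNP$-hardness is inherited from \cref{main-semilinear}.

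The main obstacle I anticipate is making the decomposition over the $\Sigma^*$-component effective with only polynomial blow-up: the naive argument produces a regular partition of $\Sigma^*$ whose size depends on the (unknown) separator, so one must argue that the \emph{existence} of any separator is equivalent to the existence of one whose $\Sigma^*$-classes are induced by a polynomially describable congruence — or, preferably, avoid the decomposition entirely by encoding the word component into counters and appealing to the fact that over $\N^d$ regular/recognizable separability only sees ultimately-periodic structure. I would pursue the second route, since it reduces the corollary to \cref{main-epa} in one shot: produce, in polynomial time, existential Presburger formulas $\varphi_K,\varphi_L$ over $\N^{m+|\Sigma|}$ such that $K,L$ are recognizably separable in $\Sigma^*\times\N^m$ iff the sets they define are recognizably separable in $\N^{m+|\Sigma|}$, using that a recognizable subset of $\Sigma^*$ is, up to the Myhill–Nerode bound, determined on each residue class by letter counts, and that this suffices because the two rational sets in question have semilinear Parikh images. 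The verification that this encoding preserves separability in both directions is the one place where I would need to be careful, checking it against the concrete example~\cref{example-small-model} warned about; once that is in place, membership in $\coNP$ is immediate from \cref{main-epa} and $\coNP$-hardness from \cref{main-semilinear}, giving $\coNP$-completeness.
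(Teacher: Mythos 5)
Your lower bound matches the paper's (semilinear sets, viewed as rational subsets of $\{\varepsilon\}\times\N^m$, inherit hardness from \cref{main-semilinear}), but your upper bound rests on a step that is not just delicate---it is false. You propose to encode the word component by its Parikh image (letter counts) and claim that recognizable separability in $\Sigma^*\times\N^m$ holds if and only if the encoded sets are recognizably separable in $\N^{m+|\Sigma|}$, on the grounds that ``the relevant invariant is exactly the commutative/ultimately-periodic structure.'' Recognizable subsets of $\Sigma^*\times\N^m$ are finite unions of products $R\times S$ with $R\subseteq\Sigma^*$ regular, and regular languages are emphatically not determined by letter counts. Concretely, take $K=\{(ab,\bzero)\}$ and $L=\{(ba,\bzero)\}$: these are rational and separable (by the recognizable set $\{ab\}\times\N^m$), yet their Parikh encodings are the \emph{same} one-element subset of $\N^{2+m}$ and hence inseparable. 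So your reduction is only sound in one direction (separability of the encodings pulls back along the Parikh morphism, since recognizable sets are closed under inverse morphisms), and an algorithm built on it would wrongly declare separable instances inseparable. Your fallback route via a Mezei-style decomposition of the separator into products $R_j\times S_j$ runs into exactly the unbounded-index issue you identify, and you do not resolve it.

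The paper sidesteps all of this by quoting the reduction of Choffrut and Grigorieff \cite[Theorem~10]{ChoG06}, which transforms (in polynomial time) recognizable separability of rational subsets of $\Sigma^*\times\N^m$ into recognizable separability of rational subsets of $\N^{2m}$---i.e., the word component is eliminated by their argument, not by abelianization---and then converts each rational subset of $\N^{2m}$ into a polynomial-size existential Presburger formula \cite[Theorem~1]{DBLP:conf/icalp/SeidlSMH04}, at which point \cref{main-epa} gives the $\coNP$ upper bound. To repair your proof you would either need to reprove something like that reduction or cite it; the Parikh-image shortcut cannot be salvaged as stated.
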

Decidability was first shown by Choffrut and Grigorieff~\cite[Theorem
1]{ChoG06}. The $\coNP$ upper bound follows because Choffrut and
Grigorieff~\cite[Theorem 10]{ChoG06} reduce recognizable separability of
subsets of $\Sigma^*\times\N^d$ to recognizable separability of rational subsets of
$\N^{2d}$ (and their reduction is clearly in polynomial time). Moreover, for a
given rational subset of $\N^{2d}$, one can construct in polynomial time an
equivalent existential Presburger formula~\cite[Theorem
1]{DBLP:conf/icalp/SeidlSMH04}. Thus, the upper bound follows from
\cref{main-epa}. Since semilinear sets in $\N^d$ (given by a semilinear representation) can be viewed as rational subsets of $\N^d$ (and hence of $\Sigma^*\times\N^d$), the $\coNP$ lower bound is inherited from \cref{main-semilinear}.

\subparagraph{Regular separability of Parikh automata} Our second main result is the following:
\begin{theorem}\label{main-parikh}
Regular separability for Parikh automata is $\coNP$-complete.
\end{theorem}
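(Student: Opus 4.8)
The plan is to establish the two directions of \cref{main-parikh} separately: the $\coNP$ lower bound and the $\coNP$ upper bound. The lower bound is straightforward. Emptiness of Parikh automata is $\NP$-complete, so given a Parikh automaton $\cA$, we have $\regsep{L(\cA)}{\Sigma^*}$ if and only if $L(\cA)=\emptyset$. This gives $\coNP$-hardness even for a fixed (universal) second language. Alternatively, one can reduce from non-universality, which is also $\NP$-complete. The interesting work is entirely in the upper bound.

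For the upper bound, I would start from the characterization of regular separability of Parikh automata from \cite{RegSepParikh}: after a polynomial-time preprocessing of the two input automata $\cA_1,\cA_2$, regular separability of $L(\cA_1)$ and $L(\cA_2)$ is equivalent to recognizable separability of two semilinear sets $C_1,C_2\subseteq\N^k$, where $k$ is the number of simple cycles in the relevant automaton and each vector records, for some run, how often each simple cycle is traversed. The obstacle is that $k$ is exponential in the input size, so we cannot simply invoke \cref{main-epa} (that would only yield $\coNEXP$), and, as the introduction warns via \cref{example-small-model}, we cannot hope for a small-model property on the set of components either. So the approach is: (1)~recall from \cref{characterization-recognizable-separability} that inseparability of $C_1,C_2$ is witnessed by a nonempty intersection $\hat{C}_1\cap\hat{C}_2\neq\emptyset$ of associated semilinear sets $\hat{C}_1,\hat{C}_2\subseteq\N^k$; (2)~rather than representing a point of $\N^k$ explicitly, design a nondeterministic polynomial-time verification procedure that guesses a witness $\bv\in\hat{C}_1\cap\hat{C}_2$ in a \emph{succinct, on-the-fly} form and checks membership in both $\hat{C}_i$.

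The key technical step is this succinct verification. A point $\bv\in\N^k$ with $k$ = number of simple cycles should be thought of as a multiset (with multiplicities) of simple cycles in $\cA_i$. I would build, for each $i$, a polynomial-size counter machine (an integer VASS / Parikh-type device) that reads a description of such a collection of cycles presented incrementally---guessing each simple cycle as a closed walk in $\cA_i$ together with a multiplicity---and uses polynomially many counters to check that the guessed collection, together with the base vectors coming from runs, actually lies in $\hat{C}_i$. Crucially the conditions defining $\hat{C}_i$ (from the construction in \cite{RegSepParikh} and from \cref{characterization-recognizable-separability}) are "local" enough---connectivity/Eulerian-type conditions on which cycles are used, plus the semilinear target constraint $C_i$ expressed on the aggregate counter vector---that they can be enforced with polynomially many counters and a polynomial-size control, even though the cycle-index vector itself is exponential-dimensional. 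One then has a single nondeterministic procedure that guesses the common collection once, simulates both machines in parallel, and checks the Presburger side-conditions; since the target set $C_i$ is semilinear and the resulting feasibility question reduces to a system of linear Diophantine constraints (or an existential Presburger formula) of polynomial size, the whole check runs in $\NP$. Hence inseparability is in $\NP$, so regular separability of Parikh automata is in $\coNP$, and with the lower bound we get $\coNP$-completeness.

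The main obstacle I anticipate is carrying out step~(2) faithfully: making precise exactly which semilinear sets $\hat{C}_1,\hat{C}_2$ arise from \cref{characterization-recognizable-separability} when applied to the $C_i$ from \cite{RegSepParikh}, and then verifying that membership in $\hat{C}_i$ decomposes into conditions checkable by a polynomial-size counter machine that guesses cycles one at a time. In particular one must be careful that the "guess a collection of cycles" process does not smuggle in exponentially many distinct cycles that need to be tracked simultaneously---the device should only ever need to remember a bounded amount of aggregate information (running counter totals, a current partial cycle, connectivity bookkeeping), never the full exponential-dimensional index vector. Verifying this, and bounding the size of the resulting existential Presburger instance by a polynomial, is the crux of the argument; everything else (the preprocessing, the reduction to intersection, the final $\NP$ feasibility check) is routine given the cited results.
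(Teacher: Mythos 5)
Your overall strategy is the same as the paper's: reduce to recognizable separability of the exponential-dimensional cycle-counting sets, invoke \cref{characterization-recognizable-separability}, and replace an explicit witness of the intersection by a polynomial-size counter machine that guesses cycles on the fly. However, the proposal leaves unresolved exactly the step the proof turns on, and as sketched it would fail there. The intersection from \cref{characterization-recognizable-separability2} is $(A+U^*+V_J^*)\cap(B+V^*+U_J^*)$, where $J$ is the set of \emph{strongly unbounded} coordinates; coordinates here are cycles, so your machine must decide, for a cycle it guesses on the fly and cannot store, whether that cycle's coordinate lies in $J$ before attributing it to the $U_J^*$ or $V_J^*$ summand. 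Your ``local, Eulerian-type conditions plus the semilinear target constraint'' handle membership in $A+U^*$ and $B+V^*$ (flow conditions and the acceptance constraint), but strong unboundedness is a joint property of the two sets, quantified over multisets of cycles with equal support in both, and is not local in any obvious sense. The paper's missing ingredient is \cref{lem:strongUnboundBiCancel}: a cycle is strongly unbounded iff every transition in it is \emph{bi-cancelable}, a transition-level property decidable in $\NP$ (\cref{lem:bicancelNP}). The algorithm guesses a set $C$ of transitions, verifies bi-cancelability of each, and admits a guessed cycle into the $J$-restricted summands iff all its transitions lie in $C$; soundness of an incomplete guess of $C$ is rescued by monotonicity of the intersection in $J$. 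Simply dropping the $J$-restriction is unsound (the characterization with $J$ replaced by all coordinates is false), so without some such transition-level characterization your counter machine cannot enforce it.

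A second, smaller issue: you keep the \emph{simple} cycles of \cite{RegSepParikh}. Verifying on the fly that a guessed closed walk is simple requires remembering the set of visited states, i.e.\ exponentially many control states; the paper instead reworks the skeleton decomposition with \emph{short} cycles (length at most $|Q|$), which a polynomial-size machine checks with a step counter, and re-derives the analogues of \cref{lem:skeletons,words-to-vectors} for them. Also, the equivalence with recognizable separability of cycle-counting sets holds per skeleton (guessed inside the $\NP$ procedure), not after a single polynomial-time preprocessing. The lower-bound direction of your proposal is fine (modulo phrasing: non-emptiness of Parikh automata is $\NP$-complete, hence emptiness, and thus separability from $\Sigma^*$, is $\coNP$-hard).
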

The $\coNP$ lower bound comes via the $\coNP$-complete emptiness problem:
For a given Parikh automaton accepting a language $K\subseteq\Sigma^*$, we have
$\regsep{K}{\Sigma^*}$ if and only if $K=\emptyset$. Thus, the interesting part is
the upper bound, which we prove in \cref{section-parikh-automata}. This is a significant
improvement to the previously known elementary (or finitely iterated exponential time)
complexity upper bound by Clemente, Czerwi\'{n}ski, Lasota, and Paperman~\cite{RegSepParikh}.

\Cref{main-parikh} can also be applied to deciding regularity of deterministic Parikh automata.
\begin{corollary}\label{main-regularity}
For deterministic Parikh automata with target sets given as quantifier-free Presburger formulas, deciding regularity is $\coNP$-complete.
\end{corollary}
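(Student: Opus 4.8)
\textbf{Proof plan for \cref{main-regularity}.}
The plan is to reduce regularity of a deterministic Parikh automaton to regular separability of two Parikh automata, and invoke \cref{main-parikh} for the $\coNP$ upper bound; the $\coNP$ lower bound should follow from the $\coNP$-complete emptiness/universality problems for deterministic Parikh automata (with quantifier-free target sets, so that complementation is cheap). Let $\cA$ be a deterministic Parikh automaton over $\Sigma$ accepting $L\subseteq\Sigma^*$, with target sets specified by quantifier-free Presburger formulas. The first step is to construct, in polynomial time, a Parikh automaton $\cB$ accepting $K=\Sigma^*\setminus L$. This is the crucial place where determinism and the representation of target sets matter: because $\cA$ is deterministic (and after a standard polynomial-time preprocessing, total, i.e.\ every word has exactly one full run), complementing the language amounts to complementing, for each state $q$, the target condition ``accumulated counters lie in $C_q$''. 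Since $C_q$ is given by a quantifier-free formula $\varphi_q(\bx)$, the complement in $\N^m$ is defined by $\neg\varphi_q(\bx)$, again quantifier-free and of polynomial size; so $\cB$ has the same underlying transition structure as $\cA$ but with target sets $\neg\varphi_q$. Thus $\cB$ is a Parikh automaton (no longer necessarily deterministic in spirit, but we do not need that) with $L(\cB)=\Sigma^*\setminus L$, constructed in polynomial time.

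The second step is the key logical equivalence: $L$ is regular if and only if $L$ and $K=\Sigma^*\setminus L$ are regularly separable. One direction is immediate: if $L$ is regular, then $L$ itself is a regular separator. For the converse, suppose $S$ is a regular language with $L\subseteq S$ and $S\cap K=\emptyset$; then $S\cap(\Sigma^*\setminus L)=\emptyset$ forces $S\subseteq L$, and together with $L\subseteq S$ we get $S=L$, so $L$ is regular. Hence deciding regularity of $\cA$ reduces in polynomial time to deciding regular separability of the two Parikh automata $\cA$ and $\cB$, which is in $\coNP$ by \cref{main-parikh}. This yields the $\coNP$ upper bound. (Note that this argument genuinely uses that $\cA$ is deterministic with a cheaply complementable target set: for general nondeterministic Parikh automata one cannot build $\cB$ for the complement in polynomial time, since Parikh automata are not effectively closed under complement with polynomial blow-up.)

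For the $\coNP$ lower bound, the plan is to reduce from the complement of the emptiness problem for deterministic Parikh automata with quantifier-free target sets, or equivalently from non-universality. The simplest route: given a deterministic Parikh automaton $\cA$ (with quantifier-free target set) accepting $L$, one builds in polynomial time a deterministic Parikh automaton $\cA'$ accepting $L\cdot\{\$\}\cup\Sigma^*$ over the alphabet $\Sigma\cup\{\$\}$ in such a way that $L(\cA')$ is regular if and only if $L$ is regular — or, more robustly, one encodes an $\NP$-hard instance (such as emptiness of an existential Presburger formula, realized as the target set of a one-state, one-letter-free Parikh automaton whose language over $\Sigma=\emptyset$-like alphabet is either $\emptyset$ or everything) so that the automaton's language is regular exactly when the instance is negative. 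The main obstacle here is bookkeeping: one must ensure the gadget automaton is genuinely deterministic and that its target set stays quantifier-free of polynomial size, while arranging that ``accepts a regular language'' captures the intended $\coNP$-hard property; since every finite and every co-finite language is regular, a clean choice is to make $L(\cA)$ equal to a fixed non-regular language like $\{a^nb^n\mid n\in C\}$ gated by an easy-to-check semilinear condition $C$, so that regularity collapses exactly to triviality of $C$. I expect this lower-bound gadget construction — not the upper bound — to be the only nontrivial part, and it is entirely elementary.
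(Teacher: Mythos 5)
Your upper bound is exactly the paper's argument: complement the deterministic Parikh automaton in polynomial time (possible because it is deterministic, can be made total, and quantifier-free target sets can be negated without blow-up), observe that $L$ is regular if and only if $L$ and $\Sigma^*\setminus L$ are regularly separable, and invoke \cref{main-parikh}. This part is correct and matches the paper.

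The lower bound is where you deviate, and your sketch has a genuine gap. The paper reduces from monadic decomposability of quantifier-free Presburger formulas, which is $\coNP$-complete \cite{DBLP:conf/cade/HagueLRW20}: from $\varphi(x_1,\ldots,x_n)$ one builds a deterministic Parikh automaton for $L_\varphi=\{a_1^{x_1}\cdots a_n^{x_n}\mid\varphi(x_1,\ldots,x_n)\}$, and by Ginsburg--Spanier \cite{ginsburg1966bounded}, $L_\varphi$ is regular if and only if $\varphi$ is monadically decomposable. You instead propose reducing from emptiness/unsatisfiability, but neither concrete gadget you sketch works as stated. The ``one-state'' gadget whose language is either empty or everything cannot work, because both of those languages are regular, so regularity does not distinguish the two cases. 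The gadget $\{a^nb^n\mid n\in C\}$ makes regularity equivalent to \emph{finiteness} of $C$, not to its emptiness or ``triviality'': a satisfiable gate formula with finitely many solutions (say $C=\{5\}$) still yields a regular language, so a reduction from unsatisfiability along these lines is unsound, and you neither state nor prove $\coNP$-hardness of the infiniteness question you would actually need. The repair is to decouple the gate from the pumped part, e.g.\ take $L=\{a_1^{x_1}\cdots a_n^{x_n}b^mc^m\mid \psi(x_1,\ldots,x_n),\ m\in\N\}$: this is empty (hence regular) precisely when $\psi$ is unsatisfiable, and non-regular as soon as $\psi$ has one solution, and it is realizable by a deterministic Parikh automaton with quantifier-free target sets; with that fix your route would give a correct alternative to the paper's reduction, but as written the lower bound does not go through.
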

Decidability of regularity was shown by Cadilhac, Finkel, and McKenzie~\cite[Theorem
25]{DBLP:journals/ijfcs/CadilhacFM13} (in the slightly more general setting of unambiguous constrained automata).
For the $\coNP$ upper bound, note that for a language $L\subseteq\Sigma^*$ given by a deterministic Parikh automaton (with quantifier-free formulas for the target sets), one can in polynomial time construct the same type of automaton for the complement $\Sigma^*\setminus L$. Since $L$ is regular if and only if $L$ and $\Sigma^*\setminus L$ are separable by a regular language, we can invoke \cref{main-parikh}.
The $\coNP$ lower bound is inherited from monadic decomposability of quantifier-free formulas. 
Indeed, given a quantifier-free Presburger formula $\varphi(x_1,\ldots,x_n)$ with free variables $(x_1,\ldots,x_n)$, one easily constructs a deterministic Parikh automaton (with quantifier-free target sets) for the language $L_\varphi=\{a_1^{x_1}\cdots a_n^{x_n} \mid \varphi(x_1,\ldots,x_n)\}$. As shown by Ginsburg and Spanier~\cite[Theorem 1.2]{ginsburg1966bounded}, $L_\varphi$ is regular if and only if $\varphi$ is monadically decomposable. However, monadic decomposability for quantifier-free formulas is $\coNP$-complete~\cite[Theorem 1]{DBLP:conf/cade/HagueLRW20}.

For the $\coNP$ upper bound in \cref{main-regularity}, we cannot drop the
assumption that the formula be quantifier-free. This is because if the target
sets can be existential Presburger formulas, then the regularity problem is
$\coNEXP$-hard. This follows by the same reduction from monadic
decomposability: If we construct $L_\varphi$ as above using an existential
formula $\varphi$, then again, $L_\varphi$ is regular if and only if $\varphi$
is monadically decomposable. Moreover, monadic decomposability for existential
formulas is $\coNEXP$-complete~\cite[Corollary
3.6]{DBLP:conf/icalp/HaaseKMMZ24}.

\section{A characterization of separability in hyperlinear sets}\label{sec:hyperlinear}

Before we prove our two main results, \cref{main-epa,main-parikh}, we should
recall the ideas of the existing algorithms~\cite{ChoG06,SepReach} for
recognizable separability of linear sets. We will use these ideas to
obtain a new characterization of separability in hyperlinear sets.

Let $L_1,L_2\subseteq\N^d$ be two linear sets. The
algorithms~\cite{ChoG06,SepReach} rely on a procedure that successively
eliminates ``bounded components'': If, say, $L_1$ is bounded in component $j$
by some $b\in\N$, then one can observe that $\regsep{L_1}{L_2}$ if, and only if,
$\regsep{L_1[j\mapsto x]}{L_2[j\mapsto x]}$ for every $x\in[0,b]$. Here,
$L_i[j\mapsto x]$ is $L_i$ restricted to those vectors that have $x$ in the
$j$-th component, and then projected to all components $\ne j$. Therefore,
the algorithms of \cite{ChoG06,SepReach} recursively check separability of
$L_1[j\mapsto x]$ and $L_2[j\mapsto x]$ for each $x\in[0,b]$.
This process invokes several expensive intersection operations on semilinear
sets and thus has high complexity.  Instead, our approach immediately guesses
and verifies the set of components that remain after the elimination process.
The corresponding checks involve the notion of twin-unboundedness.

\subparagraph{Twin-unbounded components} 
Our notion applies, slightly more generally, to hyperlinear sets. Hence, let
$R=A+U^*\subseteq\N^d$ and $S=B+V^*\subseteq\N^d$ be two hyperlinear sets where
$A,B,U,V\subseteq\N^d$ are finite sets. 

\begin{definition}
  A coordinate $j\in[1,d]$ is \emph{twin-unbounded for $R$ and $S$} if there
  exist $\bp\in U^*$ and $\bq\in V^*$ such that $j\in\supp(\bp)=\supp(\bq)$.
\end{definition}
Hence, intuitively, twin-unbounded coordinates are those that can be made large/driven up in $R$ in the same way as in $S$.
There is yet another characterization of twin-unbounded coordinates. Let $j \in [1,d]$. We say the $j$-th coordinate of the hyperlinear set $S=B+V^*$ is \emph{bounded} if there is no period vector in $V$ with support on $j$, i.e., $j\notin\supp(\bp)$ for all $\bp\in V$. We say that a subset $J\subseteq[1,d]$ of coordinates is bounded in $S$ if each $j\in J$ is bounded in $S$.
Consider the following process: Given two hyperlinear sets $R$ and $S$. We modify $R$ and $S$ by performing each of the following three steps for each coordinate $j\in[1,d]$ until the sets of remaining period vectors in $R$ and $S$ stabilize:
\begin{itemize}
  \item If neither $R$ nor $S$ is bounded at $j$, we leave $S$ and $R$ untouched.
  \item If only $R$ is bounded at $j$, we remove all period vectors from $S$ which have support on $j$.
  \item If only $S$ is bounded at $j$, we remove all period vectors from $R$ which have support on $j$.
\end{itemize}
Then, the coordinates that remain unbounded are precisely the twin-unbounded ones.

\begin{example}
  Consider $R=\{(1,0,1)\}^*$ and $S=\{(1,1,0),(0,0,1)\}^*$. Then $R$ is bounded by the value $0$ at coordinate $2$. So $R$ and $S$ are separable if and only if $R$ and $S$ restricted to the vectors having the value $0$ in the second coordinate. So, we only consider this restriction of $S$---in our algorithm this is reflected by the deletion of the period vector $(1,1,0)$ of $S$. After deletion of the period vector $(1,1,0)$, $S$ is bounded at coordinate $1$ by the value $0$. So, we remove the period vector $(1,0,1)$ from $R$. Finally, the period vector $(0,0,1)$ of $S$ gets removed since $R$ is now bounded at coordinate $3$. Hence, our algorithm terminates in this case with no twin-unbounded coordinates. This example shows that even if $R$ and $S$ both are unbounded in coordinates $1$ and $3$, none of these coordinates is twin-unbounded.
  
  If $R=\{(1,0,1),(0,1,0)\}^*$ and $S=\{(1,1,0),(0,0,1)\}^*$, then no coordinate is bounded in $R$ and $S$. Hence, all coordinates are twin-unbounded and no period vector gets removed.
\end{example}

For $J\subseteq[1,d]$, we write $U_J=\{\bp\in U\mid
\supp(\bp)\subseteq J\}$ and $V_J=\{\bq\in V\mid \supp(\bq)\subseteq J\}$.

\subparagraph{Separating by modular constraints}
As observed in \cite{ChoG06,SepReach}, if all coordinates of two linear sets $L_1,L_2$ are unbounded, then separability holds if and only if the two sets can be separated by modulo constraints. This relies on the well known fact that finitely generated abelian groups are \emph{subgroup separable}, i.e.\ that for every element $\bu\in\Z^d$ that does not belong to a subgroup $A\subseteq\Z^d$, there exists a homomorphism $\varphi\colon\Z^d\to \bbF$ into a finite group $\bbF$ such that (i)~$A$ is included in the kernel of $\varphi$ and (ii)~$\varphi(\bu)\ne 0$. In our characterization (\cref{characterization-recognizable-separability}) we will use similar arguments and therefore we will recall subgroup separability here. We include a short proof in\ %
\ifFullVersion%
  \cref{app:hyperlinear}.%
\else%
  the full version.%
\fi%
\begin{restatable}[Subgroup separability]{lemma}{subgroupSeparability}\label{subgroup-separability}
If $A\subseteq\Z^d$ is a subgroup and $\bu\in \Z^d\setminus A$, then there
	is an $s\in\N$, $s>0$, and a morphism $\varphi\colon\Z^d\to\Z/s\Z$ with (i)~$\varphi(A)=0$ and (ii)~$\varphi(\bu)\ne 0$.
\end{restatable}

\subparagraph{Separability vs.\ intersection emptiness}
We will now characterize inseparability of hyperlinear sets $R,S$ via the intersection of two hyperlinear sets $\hat{R}$ and $\hat{S}$ associated with $R,S$. The proof will rely on an equivalence relation of vectors.
For vectors $\bu,\bv\in\N^d$ and $k\in\N\setminus\{0\}$, we write $\bu\sim_k\bv$ if for every $i\in[1,d]$, we have 
\begin{enumerate}[(1)]
\item $\bu[i]=\bv[i]\le k$ or
\item $\bu[i],\bv[i]>k$ and $\bu[i]\equiv \bv[i]\bmod{k}$.
\end{enumerate}

The following was shown in \cite[Prop. 18]{SepReach}.
\begin{lemma}\label{lem:insepSim}
  For any sets $X,Y\subseteq\N^d$, the following are equivalent:
  \begin{enumerate}[(1)]
    \item $X$ and $Y$ are not separable by a recognizable set.
    \item for each $k\in\N\setminus\{0\}$ there are $\bx_k\in X$ and $\by_k\in Y$ with $\bx_k\sim_k \by_k$.
  \end{enumerate}
\end{lemma}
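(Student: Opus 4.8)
The plan is to realize each relation $\sim_k$ as the kernel of a canonical surjective morphism $\phi_k\colon\N^m\to\N^m/{\sim_k}$ onto a finite monoid, and then to observe that \emph{every} morphism from $\N^m$ into a finite monoid factors through some $\phi_k$. Granting this, both implications become short: condition~(2) says precisely that $\phi_k(X)\cap\phi_k(Y)\neq\emptyset$ for every $k$, while condition~(1) says precisely that $X$ and $Y$ admit no recognizable separator, and recognizable sets are exactly the $\phi_k$-saturated ones.

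First I would verify that $\sim_k$ is a congruence of $(\N^m,+)$ with finitely many classes. Working coordinatewise reduces this to $m=1$, where $\sim_k$ keeps $0,1,\dots,k$ as singleton classes and lumps the integers $>k$ together by residue modulo $k$, giving $2k+1$ classes in all; compatibility with addition is a routine case check on whether a sum stays below or exceeds the threshold $k$. Writing $\phi_k$ for the quotient morphism, we thus have $\bx\sim_k\by \iff \phi_k(\bx)=\phi_k(\by)$, and $\phi_k$ recognizes every set of the form $\phi_k^{-1}(A)$.

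For the implication $\neg(2)\Rightarrow\neg(1)$: if $k$ is such that $\bx\not\sim_k\by$ for all $\bx\in X$, $\by\in Y$, then $\phi_k(X)\cap\phi_k(Y)=\emptyset$, so $S:=\phi_k^{-1}(\phi_k(X))$ is a recognizable set with $X\subseteq S$ and $S\cap Y=\emptyset$; hence $\regsep{X}{Y}$, i.e.\ (1) fails. For $(2)\Rightarrow(1)$, suppose towards a contradiction that a recognizable $S$ separates $X$ from $Y$, say $S=\phi^{-1}(\phi(S))$ for a morphism $\phi\colon\N^m\to F$ into a finite monoid. The images $g_i:=\phi(\be_i)$ of the unit vectors pairwise commute, and each generates a finite cyclic submonoid of $F$, so the powers $g_i^{\,n}$ ($n\ge0$) are ultimately periodic; hence there are an index $\mu$ and a period $\pi$ with $g_i^{\,a}=g_i^{\,b}$ whenever $a,b\ge\mu$ and $a\equiv b\bmod\pi$, for every $i$. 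Fix any multiple $k$ of $\pi$ with $k\ge\mu$. Then $\bx\sim_k\by$ forces $g_i^{\,\bx[i]}=g_i^{\,\by[i]}$ for each $i$ (trivially when $\bx[i]=\by[i]\le k$; and because $\pi\mid k$ and $\bx[i],\by[i]>k\ge\mu$ in the other case), hence $\phi(\bx)=\prod_i g_i^{\,\bx[i]}=\prod_i g_i^{\,\by[i]}=\phi(\by)$, the products being well defined since the $g_i$ commute. Now apply hypothesis~(2) to this $k$: it yields $\bx_k\in X$ and $\by_k\in Y$ with $\bx_k\sim_k\by_k$, whence $\phi(\by_k)=\phi(\bx_k)\in\phi(S)$ and so $\by_k\in\phi^{-1}(\phi(S))=S$, contradicting $S\cap Y=\emptyset$.

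I expect the only genuinely delicate point to be this last factorization step, i.e.\ that the family $(\sim_k)_{k\ge1}$ is cofinal (and directed) among the finite-index congruences of $\N^m$; but this is exactly the ultimate periodicity of cyclic submonoids of a finite monoid (one may even take $\mu=|F|$ and $\pi=|F|!$), so no real obstacle should arise. The verification that each $\sim_k$ is a congruence, and that recognizable subsets of $\N^m$ are precisely the sets saturated by some $\phi_k$, is elementary.
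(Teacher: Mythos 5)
Your proof is correct. Note that the paper does not prove this lemma itself but only cites \cite[Prop.~23]{SepReach}; your argument --- checking that each $\sim_k$ is a finite-index congruence of $(\N^m,+)$, and that every morphism $\phi\colon\N^m\to F$ into a finite monoid is refined by $\sim_k$ once $k\ge\mu$ and $\pi\mid k$ for the common index $\mu$ and period $\pi$ of the cyclic submonoids generated by the $\phi(\be_i)$ --- is the standard self-contained proof of exactly this equivalence, and all the delicate points are handled properly (the $\phi(\be_i)$ commute because $\N^m$ is commutative, and the two cases in the definition of $\sim_k$ are each covered by the choice of $k$).
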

Let $k,\ell\in\N\setminus\{0\}$ be such that $k$ divides $\ell$. We can observe that $\bu\sim_\ell\bv$ implies $\bu\sim_k\bv$ in this case. Thus, to show recognizable inseparability of two sets $X,Y\subseteq\N^d$, it suffices to find $\bx_k\in X$ and $\by_k\in Y$ for almost all numbers $k\in\N\setminus\{0\}$. We will use this fact in the proof of the following characterization of inseparability.

\begin{proposition}\label{characterization-recognizable-separability}
Let $R=A+U^*\subseteq\N^d$ and $S=B+V^*\subseteq\N^d$ be hyperlinear sets.  Then
$R$ and $S$ are \emph{not} separable by a recognizable set if and only if  the intersection
\begin{equation} (A+U^*-U_J^*)\cap (B+V^*-V_J^*)\label{inseparability-intersection}\end{equation}
is non-empty, where $J\subseteq[1,d]$ is the set of coordinates that are twin-unbounded for $R,S$.
\end{proposition}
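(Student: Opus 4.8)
The plan is to prove both directions by exploiting \cref{lem:insepSim}, which reduces inseparability to the existence, for every modulus $k$, of vectors $\bx_k \in R$ and $\by_k \in S$ with $\bx_k \sim_k \by_k$. The set in \eqref{inseparability-intersection} is exactly the set of vectors reachable by starting from a base point of $R$ (resp.\ $S$), adding an arbitrary combination of \emph{all} periods, and then subtracting an arbitrary combination of periods supported entirely inside $J$; intuitively, $U_J^*$ captures the periods that the elimination process leaves "free", so subtracting them back off lets us normalize the coordinates in $J$ to small values while keeping divisibility.

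For the \emph{if} direction, suppose the intersection contains a vector $\bw = \ba + \bp - \bp' = \bb + \bq - \bq'$ with $\ba\in A$, $\bp\in U^*$, $\bp'\in U_J^*$, $\bb\in B$, $\bq\in V^*$, $\bq'\in V_J^*$. Fix any $k$. The idea is to produce $\bx_k\in R$ and $\by_k\in S$ with $\bx_k\sim_k\by_k$. On coordinates outside $J$ the vector $\bw$ already lies in $R$ and $S$ with matching value, but $\bw$ may have negative entries on $J$. To fix this, we add a large multiple of the period vectors witnessing strong unboundedness: for each $j\in J$ there are $\bp_j\in U^*$, $\bq_j\in V^*$ with $j\in\supp(\bp_j)=\supp(\bq_j)$. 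We set $\bx_k=\ba+\bp+N\sum_{j\in J}\bp_j$ and $\by_k=\bb+\bq+N\sum_{j\in J}\bq_j$ for suitable $N$. First one chooses $N$ a large multiple of $k$ so that $\bx_k,\by_k$ are honestly in $R,S$ (nonnegative) and $\bx_k-\bw$, $\by_k-\bw$ are both nonnegative and $\equiv 0 \bmod k$ on every coordinate, and then one must further check the $\sim_k$ relation coordinate by coordinate: on coordinates in $J$ both values exceed $k$ and are congruent (difference is a multiple of $k$ on both sides, and both equal $\bw$ plus that difference, but we need the differences to agree mod $k$ — they do, since both are $0 \bmod k$), and on coordinates outside $J$ we need the values to \emph{agree}; here the subtlety is that outside $J$ we have only added multiples of periods whose support lies in $J$ (since $\supp(\bp_j)=\supp(\bq_j)$ and... no — $\supp(\bp_j)$ need not be inside $J$). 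This is the first place care is needed: one must argue that the extra contribution $N\sum_j\bp_j$ and $N\sum_j\bq_j$ can be made to agree outside $J$, or alternatively restrict to period vectors supported in $J$. Here is where the elimination-process characterization of $J$ is used: a period vector with support meeting both $J$ and its complement, where the complement part is bounded, gets eliminated, so one can assume the witnessing periods $\bp_j,\bq_j$ have support \emph{inside} $J$. With that, the added vectors are supported in $J$, agreement outside $J$ is automatic, and the remaining checks on $J$ go through by choosing $N$ divisible by $k$ and larger than all relevant base/period values.

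For the \emph{only if} direction, assume $R$ and $S$ are inseparable, so by \cref{lem:insepSim} we get $\bx_k\in R$, $\by_k\in S$ with $\bx_k\sim_k\by_k$ for every $k$. Pick $k$ to be a common multiple of all the entries appearing in $A,B,U,V$ that exceeds all of them (say $k = (\text{product of all nonzero entries})$, or simply $k$ larger than and divisible by every entry). Write $\bx_k = \ba + \sum_{\bu\in U}\lambda_\bu \bu$ and $\by_k = \bb + \sum_{\bv\in V}\mu_\bv\bv$. The plan is to show $\bw := \bx_k$ (suitably adjusted) lies in both sets of \eqref{inseparability-intersection}. On a coordinate $j\notin J$, at least one of $R,S$ is bounded there; the elimination process shows that in the surviving representation the value on such $j$ is "small" (bounded by the base vectors), while $\bx_k\sim_k\by_k$ with $k$ huge forces the coordinates where the values disagree to both be $\le k$ and equal. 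The key combinatorial claim is: one can choose the decomposition of $\bx_k$ so that, after subtracting an element of $U_J^*$, the resulting vector agrees with an analogous adjustment of $\by_k$. Concretely, on coordinates in $J$ one subtracts off enough copies of $J$-supported periods to bring both $\bx_k$ and $\by_k$ down to their (equal, since $\sim_k$ and the values may exceed $k$ but are congruent mod $k$...) — again one uses that $k$ was chosen divisible by everything so that the periods supported in $J$ can shift values by controlled amounts. The candidate witness is the common vector obtained after this normalization, which by construction lies in $A+U^*-U_J^*$ and in $B+V^*-V_J^*$.

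\textbf{Main obstacle.} The crux in both directions is the bookkeeping that links the abstract set $J$ of strongly unbounded coordinates (and its stabilization-process characterization) to concrete manipulations of period decompositions — in particular, ensuring that the period vectors one adds or subtracts to "repair" a candidate witness have support contained in $J$, so that coordinates outside $J$ are left untouched and can be forced to agree. The elimination-process description of $J$ is precisely the tool for this, but applying it requires showing that any period vector whose support straddles $J$ and a bounded coordinate contributes nothing that cannot be absorbed, which is the technical heart of the argument. The choice of the modulus $k$ (large and divisible by all structural constants) and a matching large multiplier $N$ in the $\sim_k$-witness construction is the other delicate point: all the coordinate-wise case distinctions in the definition of $\sim_k$ must be verified simultaneously, and getting a single $k$ (resp.\ $N$) that works for every coordinate is where the routine-looking calculations actually have content.
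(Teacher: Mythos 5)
There are two genuine gaps, one in each direction. In the \emph{if} direction, your witnesses $\bx_k=\ba+\bp+N\sum_{j\in J}\bp_j$ and $\by_k=\bb+\bq+N\sum_{j\in J}\bq_j$ do not compensate for the subtracted parts $\bp'\in U_J^*$ and $\bq'\in V_J^*$ of $\bw=\ba+\bp-\bp'=\bb+\bq-\bq'$. With $N\equiv 0\bmod k$ you get $\bx_k-\bw\equiv \bp'$ and $\by_k-\bw\equiv\bq'\pmod k$, so on a coordinate $j\in J$ the two sides are congruent only if $\bp'[j]\equiv\bq'[j]\bmod k$, which has no reason to hold; your parenthetical claim that both differences are $0\bmod k$ is false. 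The repair is exactly the paper's trick: add $k\bp'$ to the $R$-side witness and $k\bq'$ to the $S$-side witness (i.e.\ use $\bw+k\bp+k\bp'\in A+U^*$ and $\bw+k\bq+k\bq'\in B+V^*$), so that both witnesses are congruent to $\bw$ itself modulo $k$ in every coordinate. Incidentally, the issue you spend most effort on (supports of the witnesses $\bp_j,\bq_j$ leaking outside $J$) is a non-issue: if $\bp\in U^*,\bq\in V^*$ and $j\in\supp(\bp)=\supp(\bq)$, then every coordinate of that common support is strongly unbounded by definition, so $\supp(\bp_j)=\supp(\bq_j)\subseteq J$ automatically; no appeal to the elimination process is needed (and the paper does not even need this, since it only uses that all added vectors are multiples of $k$).

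In the \emph{only if} direction the gap is more fundamental. You fix a single modulus $k$ computed from the entries of $A,B,U,V$ and assert a ``key combinatorial claim'' that one can normalize $\bx_k$ and $\by_k$ by subtracting elements of $U_J^*$ and $V_J^*$ so that they become equal; this claim is exactly the content of the direction and is not proved. It is also doubtful that any single such $k$ suffices: a coordinate $j\notin J$ may be unbounded in both $R$ and $S$ (only with mismatched period supports), so a single $\sim_k$-witness pair need only be \emph{congruent} there, not equal, and no subtraction of $J$-supported periods can fix such a coordinate; moreover, when $\Z^m/\langle U_J\cup V_J\rangle$ has a free part, the modulus needed to detect that $\bx_k-\by_k\notin\langle U_J\cup V_J\rangle$ depends on the size of the witnesses themselves, not just on the entries of $A,B,U,V$. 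The paper instead uses the full family of witnesses $(\bu_k,\bv_k)_{k\ge 1}$ from \cref{lem:insepSim}: after pigeonholing the base vectors and applying Dickson's lemma to get a chain $\bu_{k+1}\in\bu_k+U_J^*$, $\bv_{k+1}\in\bv_k+V_J^*$, it shows $\bu_1-\bv_1\in\langle U_J\cup V_J\rangle$ by contradiction via subgroup separability (\cref{subgroup-separability}), using that all components of $\bu_s-\bv_s$ are divisible by $s$ for the $s$ produced by that lemma. Without some argument of this kind (quantifying over arbitrarily large moduli, not one fixed in advance), the direction does not go through.
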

\begin{proof}
Suppose there is a vector $\bx$ in the intersection \eqref{inseparability-intersection}. Then we can write $\bx=\bu-\bar{\bu}$ and $\bx=\bv-\bar{\bv}$ with $\bu\in A+U^*$, $\bv\in B+V^*$, $\bar{\bu}\in U_J^*$, and $\bar{\bv}\in V_J^*$.
Since $J$ is twin-unbounded for $R$ and $S$, there are---by definition---$\bp_j\in U^*$ and $\bq_j\in V^*$ with $j\in\supp(\bp_j)=\supp(\bq_j)$ for each $j\in J$. Then for $\bp:=\sum_{j\in J}\bp_j$ and $\bq:=\sum_{j\in J}\bq_j$ we infer $J\subseteq\supp(\bp)=\supp(\bq)$. Now for each $k\in\N\setminus\{0\}$, consider the vectors
\begin{equation*}
  \bu_k=\bu-\bar{\bu}+2k\cdot \bp+k\cdot\bar{\bu}\quad\text{and}\quad
  \bv_k=\bv-\bar{\bv}+2k\cdot \bq+k\cdot\bar{\bv}\,.
\end{equation*}
Then we have $\bu_k,\bv_k\in\N^d$ for each $k\in\N\setminus\{0\}$. We claim that $\bu_k\sim_k\bv_k$ for all $k$. Indeed, on coordinates $j\in[1,d]\setminus \supp(\bp)$, the vectors $\bu_k$ and $\bv_k$ coincide with $\bx$. Moreover, on coordinates $j\in \supp(\bp)$, both vectors $\bu_k$ and $\bv_k$ are larger than $k$ and also congruent to $\bx[j]\bmod{k}$. Hence, $\bu_k\sim_k\bv_k$. Since clearly $\bu_k=\bu+2k\cdot\bp+(k-1)\cdot\bar{\bu}\in R$ and $\bv_k=\bv+2k\cdot\bq+(k-1)\cdot\bar{\bv}\in S$, \cref{lem:insepSim} implies that $R$ and $S$ are not separable.

Conversely, suppose that $R$ and $S$ are not separable. Then by \cref{lem:insepSim} there are $\bu_k\in R$ and $\bv_k\in S$ with
$\bu_k\sim_k\bv_k$ for every $k\in\N\setminus\{0\}$. 
	We claim that the sequences $\bu_1,\bu_2,\ldots$ and $\bv_1,\bv_2,\ldots$ have subsequences $\bu'_1,\bu'_2,\ldots$ and $\bv'_1,\bv'_2,\ldots$ such that for every $k\ge 1$, we have (i)~$\bu'_{k+1}\in \bu'_k+U_J^*$, (ii)~$\bv'_{k+1}\in \bv'_k+V_J^*$ and (iii)~$\bu'_k\sim_k\bv'_k$.

	The claim is easy to observe: Note that by picking subsequences, we may assume that even $\bu_k\sim_{k!}\bv_k$ for every $k\ge 1$. Moreover, the latter property is preserved by taking subsequences. Thus, since $A,B$ are finite, by picking subsequences again, we may assume that there are $\br\in A$ and $\bs\in B$ such that $\bu_k\in \br+U^*$ and $\bu_k\in\bs+V^*$ and $\bu_k\sim_{k!}\bv_k$ for $k\ge 1$. Then, by Dickson's lemma, we may assume that in addition $\bu_{k+1}\in \bu_k+U^*$ and $\bv_{k+1}\in\bv_k+V^*$ for every $k\ge 1$ (here, we apply Dickson's lemma to the $|U|$-dimensional vectors of coefficients at period vectors in $U$ and similarly for $V$).
Now since $\bu_k\sim_{k!}\bv_k$ for every $k$, it follows that the
sequences $\bu_1,\bu_2,\ldots$ and $\bv_1,\bv_2,\ldots$ are unbounded on the
same set $J\subseteq[1,d]$ of coordinates. Then clearly, $J$ is 
twin-unbounded for $R$ and $S$. This means, for all but finitely many $k$, we have 
$\bu_{k+1}\in \bu_k+U_J^*$ and $\bv_{k+1}\in\bv_k+V_J^*$. Hence, by picking another subsequence, we may assume that the latter holds for every $k\ge 1$. Then, $\bu_1,\bu_2,\ldots$ and $\bv_1,\bv_2,\ldots$ satisfy the properties (i--iii) above, establishing our claim.

We now claim that $\bu_1-\bv_1$ belongs to the group $\langle U_J\cup V_J\rangle$ generated by $U_J\cup V_J$.
	Towards a contradiction, suppose $\bu_1-\bv_1$ does not belong to $\langle U_J\cup V_J\rangle$. By \cref{subgroup-separability}, there must be an $s\in\N$, $s>0$, and a morphism $\varphi\colon\Z^d\to\Z/s\Z$ such that $\varphi(\langle U_J\cup V_J\rangle)=0$ and $\varphi(\bu_1-\bv_1)\ne 0$. However, the vector
	\[ (\bu_s-\bv_s)-(\bu_1-\bv_1) = \underbrace{(\bu_s-\bu_1)}_{\in \langle U_J\rangle}-\underbrace{(\bv_s-\bv_1)}_{\in \langle V_J\rangle}\]
	belongs to $\langle U_J\cup V_J\rangle$, but also agrees with $\bu_1-\bv_1$ under $\varphi$ (since all components of $\bu_s-\bv_s$ are divisible by $s$), contradicting \cref{subgroup-separability}. Hence $\bu_1-\bv_1\in\langle U_J\cup V_J\rangle$.

This means, we can write $\bu_1-\bv_1=\bv-\bar{\bv}-(\bu-\bar{\bu})$ with $\bu,\bar{\bu}\in U_J^*$ and $\bv,\bar{\bv}\in V_J^*$. But then the vector
$\bu_1+\bu-\bar{\bu}=\bv_1+\bv-\bar{\bv}$
belongs to the intersection \eqref{inseparability-intersection}.
\end{proof}

With \cref{characterization-recognizable-separability}, we have now
characterized inseparability of subsets of $\N^d$ via a particular intersection
of two sets in $\Z^d$. It will later be more convenient to work with
intersections of sets in $\N^d$, which motivates the following
reformulation of \cref{characterization-recognizable-separability}.
\begin{theorem}\label{characterization-recognizable-separability2}
  Let $R=A+U^*\subseteq\N^d$ and $S=B+V^*\subseteq\N^d$ be hyperlinear sets.  Then
  $R$ and $S$ are \emph{not} separable by a recognizable set if and only if  the intersection
  \begin{equation}
    (A+U^*+V_J^*)\cap (B+V^*+U_J^*)\label{inseparability-intersection2}
  \end{equation}
  is non-empty, where $J\subseteq[1,d]$ is the set of coordinates that are twin-unbounded for $R,S$.
\end{theorem}
\begin{proof}
	Direct consequence of \cref{characterization-recognizable-separability}, since clearly $A+U^*-U_J^*$ intersects $B+V^*-V_J^*$ if and only if $A+U^*+V_J^*$ intersects $B+V^*+U_J^*$.
\end{proof}

\section{Separability of semilinear sets is in $\coNP$}\label{section-epa}
Using the characterization \cref{characterization-recognizable-separability2},
we can now explain our algorithm for the $\coNP$ upper bound in \cref{main-epa}.
We describe an $\NP$ algorithm that establishes \emph{inseparability}.

\subparagraph{Algorithm Step I: Solution sets to linear Diophantine equations}
Let us first see that we can reduce the problem to the case where both input
sets are given as projections of solution sets of linear Diophantine
equations. We may assume that the input formulas are of the form
$\exists \bx\colon \kappa(\bx,\by)$, where $\kappa$ is
a formula consisting of conjunction and disjunction (i.e.\ no negation) of
atoms of the form $t\ge a$, where $t$ is a linear combination of variables
$\bx=(x_1,\ldots,x_n), \by=(y_1,\ldots,y_m)$ and integer coefficients, and $a$ is a constant. 

Let $\varphi$ be a formula as described above. It is a well known fact that
$\varphi$ can be transformed into disjunctive normal form. This means,
$\varphi$ is equivalent to a formula $\varphi_1\vee\cdots\vee\varphi_k$,
where each $\varphi_i$ (a so-called \emph{clause}) has the form
$\exists\bx\colon\xi(\bx,\by)$ such that $\xi$ is a conjunction of atoms
appearing in $\varphi$. In general, the number of clauses of $\varphi$
is exponential.

Now, let $\varphi$ and $\psi$ be the input formulas of the algorithm and
let $\varphi_1\vee\cdots\vee\varphi_k$ and $\psi_1\vee\cdots\vee\psi_\ell$
be their equivalent formulas in disjunctive normal form. Since the number
of clauses is exponential, we cannot compute all clauses for $\varphi$
and $\psi$.
However, the solution sets of $\varphi$ and $\psi$ are recognizably inseparable
if, and only if, for some pair $i,j$, the solution sets of the formulas
$\varphi_i$ and $\psi_j$ are recognizably inseparable.  This is 
due to the following fact, which follows standard ideas (see\ %
\ifFullVersion%
  \cref{app:epa}\ %
\else%
  the full version\ %
\fi%
for a proof in this particular setting).%

\begin{restatable}{lemma}{lemRepSepUnion}\label{lem:regSepUnion}
  Let $K,K_1,\ldots,K_n,L\subseteq M$ be sets from a monoid $M$ such
  that $K=K_1\cup\cdots\cup K_n$. Then $\regsep{K}{L}$ if, and only if,
  $\regsep{K_i}{L}$ for all $1\leq i\leq n$.
\end{restatable}

Thus, for deciding the inseparability of the solution sets of $\varphi$ and $\psi$
in $\NP$ it is sufficient to guess (in polynomial time) clauses $\varphi_i$
and $\psi_j$ and show that inseparability of the solution sets of these
two formulas is decidable in $\NP$. Therefore, from now on we can assume
that the input formulas are (existentially quantified) conjunctions of
atoms of the form $t\ge a$.

In particular, each of the two input sets is a projection of the solution set of a system of linear Diophantine inequalities. By introducing slack variables (which will also be projected away), we can turn \emph{inequalities} into \emph{equations}. Thus, we have as input sets $K,L\subseteq\N^d$ with
\begin{equation}\label{input-sets-inequalities}
  K=\pi(\{\bx\in\N^r \mid A\bx=\bb\})\quad\text{and}\quad
  L=\pi(\{\bx\in\N^r \mid C\bx=\bd \})\,,
\end{equation}
where $\pi\colon\Z^r\to\Z^d$ is the projection to the first $d$ components, and
$A,C\in\Z^{s\times r}$ are integer matrices, and $\bb,\bd\in\Z^s$ are integer
vectors. Note that here, assuming that the numbers $r$ of columns and the
number $s$ of rows is the same for $K$ and $L$ means no loss of generality.

\subparagraph{Algorithm Step II: Recognizable inseparability as satisfiability}
In the second step, we will reduce recognizable inseparability of $K$ and $L$ to
satisfiability of an existential Presburger formula. To this end, we use the
fact that the solution sets to $A\bx\ge\bb$ (resp.\ $C\bx\ge\bd$) are
hyperlinear sets, which allows us to apply
\cref{characterization-recognizable-separability2}.
\begin{restatable}{proposition}{inseparabilityViaInequalities}\label{inseparability-via-inequalities}
$K$ and $L$ are recognizably inseparable if, and only if, there are
vectors $\bp,\bq,\bu,\bv,\bx,\by\in\N^r$ with
\begin{enumerate}[(1)]
\item $A\bp=\bzero$, $C\bq=\bzero$, $\supp(\pi(\bp))=\supp(\pi(\bq))$, 
\item $\supp(\pi(\bu)),\supp(\pi(\bv))\subseteq \supp(\pi(\bp))$,
$A\bu=\bzero$, and $C\bv=\bzero$,
\item $A\bx=\bb$, $C\by=\bd$, and
$\pi(\bx+\bv)=\pi(\by+\bu)$.
\end{enumerate}
\end{restatable}
\begin{proof}
  We apply \cref{characterization-recognizable-separability2}. To this
  end, we use the standard hyperlinear representation for solution sets
  of systems of linear Diophantine equalities. Let $A_0\subseteq\N^r$
  be the set of all (component-wise) minimal solutions to $A\bx=\bb$, and let
  $U\subseteq\N^r$ be the set of all minimal solutions to
  $A\bx=\bzero$. Then it is well known that $A_0$ and $U$ are finite and also
  $K=\pi(A_0+U^*)=\pi(A_0)+\pi(U)^*$. In the same way, we obtain a
  hyperlinear representation $L=\pi(B_0+V^*)=\pi(B_0)+\pi(V)^*$. Then, we
  can show the \lcnamecref{inseparability-via-inequalities} using
  \cref{characterization-recognizable-separability2}. For a full proof, see\ %
  \ifFullVersion%
    \cref{app:epa}.%
  \else%
    the full version.%
  \fi%
\end{proof}

Finally, \cref{inseparability-via-inequalities} can be used to
complete the proof of our first main result:

\begin{proof}[Proof of \cref{main-epa}]
  Let $\varphi$ and $\psi$ be two existential Presburger formulas
  without negation and using only atoms of the form $t\geq0$, where
  $t$ is a linear combination of variables and integer coefficients.
  We give an $\NP$ algorithm deciding inseparability by a recognizable
  set.
  
  Since the solution sets of $\varphi$ and $\psi$ are inseparable if,
  and only if, their disjunctive normal forms have at least one pair
  of inseparable clauses, we guess such a pair of these clauses
  $\varphi_i$ and $\psi_j$ (cf.\ \cref{lem:regSepUnion}). We can
  transform $\varphi_i$ and $\psi_j$ into Diophantine equations
  $A\bx=\bb$ and $C\bx=\bd$. Using \cref{inseparability-via-inequalities}
  we obtain in polynomial time an existential Presburger formula that
  is satisfiable if, and only if, the solution sets of $A\bx=\bb$ and
  $C\bx=\bd$ are inseparable if, and only if, $\varphi_i$ and $\psi_j$
  are inseparable. Finally, the result follows from $\NP$-completeness of the  
  existential fragment of Presburger arithmetic.
\end{proof}

\section{Regular separability of Parikh automata}\label{section-parikh-automata}
We now prove our second main result: the $\coNP$ upper
bound of regular separability of Parikh automata (\cref{main-parikh}).
For this, it will be technically simpler to work with $\Z$-VASS, which are equivalent to Parikh automata. In \cite[Corollary 1]{FOSSACS23}, it was shown that the two automata models can be converted (while preserving the language) into each other in logarithmic space. Therefore, showing the $\coNP$ upper bound for $\Z$-VASS implies it for Parikh automata.

\subparagraph{Integer VASS}
A ($d$-dimensional) \emph{integer vector addition system with states} (\emph{$\Z$-VASS}, for short) is a quintuple $\cV=(Q,\Sigma,T,\iota,f)$ where $Q$ is a finite set of \emph{states}, $\Sigma$ is an \emph{alphabet}, $T\subseteq Q\times \Sigma_\varepsilon\times \Z^d\times Q$ is a finite set of \emph{transitions}, and $\iota,f\in Q$ are its \emph{source} and \emph{target state}, respectively. Here, $\Sigma_\varepsilon=\Sigma\cup\{\varepsilon\}$. A $\Z$-VASS $\cV=(Q,\Sigma,T,\iota,f)$ is called \emph{deterministic} if $\cV$ has no $\varepsilon$-labeled transitions and for each $p\in Q$ and $a\in\Sigma$ there is at most one transition of the form $(p,a,\bv,q)\in T$ (where $\bv\in\Z^d$ and $q\in Q$).

A \emph{configuration} of $\cV$ is a tuple from $Q\times\Z^d$. For two configurations $(p,\bu),(q,\bv)$ and a word $w\in\Sigma^*$ we write $(p,\bu)\xrightarrow{w}_\cV(q,\bv)$ if there are states $q_0,q_1,\ldots,q_\ell\in Q$, vectors $\bv_0,\bv_1,\ldots,\bv_\ell\in\Z^d$, and letters $a_1,\ldots,a_\ell\in\Sigma_\varepsilon$ such that $w=a_1a_2\cdots a_\ell$, $(p,\bu)=(q_0,\bv_0)$, $(q,\bv)=(q_\ell,\bv_\ell)$, and for each $1\leq i\leq\ell$ we have a transition $t_i=(q_{i-1},a_i,\bx_i,q_i)\in T$ with $\bv_i=\bv_{i-1}+\bx_i$. In this case, the sequence $t_1t_2\cdots t_\ell$ is called a ($w$-labeled) run of $\cV$. The \emph{accepted language} of $\cV$ is $L(\cV)=\{w\in\Sigma^*\mid(\iota,\bzero)\xrightarrow{w}_\cV(f,\bzero)\}$.

Let $I\subseteq[1,d]$ be a set of indices. Then we can generalize the acceptance behavior of the $\Z$-VASS $\cV$ as follows:
\[
  L(\cV,I)=\bigl\{w\in\Sigma^*\,\bigm|\,
    \exists\bv\in\Z^d\colon(\iota,\bzero)\xrightarrow{w}_\cV(f,\bv)
    \text{ and }\restrict{\bv}{I}=\bzero
  \bigr\}\,.
\]
Note that $L(\cV,[1,d])=L(\cV)$ holds.

\subparagraph{An overview of the proof of \cref{main-parikh}}
The remaining part of this section is dedicated to the proof of our second main result, \cref{main-parikh}. The first few steps (\cref{determinize,singlezvass,lem:skeletons,words-to-vectors}) are essentially the same as in \cite{RegSepParikh}, for which we briefly give an overview: The authors reduce regular separability to recognizable separability of semilinear sets in $\N^d$ (for some dimension $d$). Concretely, instead of asking for the regular separability in two given $\Z$-VASS we are counting the cycles within runs of these $\Z$-VASS. Accordingly, the dimension $d$ corresponds to the number of (simple) cycles. Unfortunately, this number is exponential in the size of the input and therefore we cannot just use our first main result (\cref{main-epa}) to prove the $\coNP$ upper complexity bound. Instead we will construct two $\Z$-VASS (of polynomial dimension) accepting sequences of cycles such that their language intersection corresponds to the intersection \eqref{inseparability-intersection2} from \cref{characterization-recognizable-separability2} (which is non-empty if, and only if, the $\Z$-VASS from the input are regularly inseparable). Intersection for $\Z$-VASS is known to be in $\NP$ implying also the $\NP$ upper complexity bound for the regular inseparability problem resp.\ the $\coNP$ upper bound for the separability problem of $\Z$-VASS.

\subparagraph{Reduction to a single integer VASS}
As announced, we will first follow the reduction from \cite{RegSepParikh}. In the first step, the regular separability problem of nondeterministic $\Z$-VASS can be reduced to the same problem in \emph{deterministic} $\Z$-VASS. This reduction is possible in polynomial time which is a bit surprising at first glance since determinization typically requires at least an exponential blowup. However, in this reduction we determinize the $\Z$-VASS ``up to some homomorphic preimage'', i.e., from two given $\Z$-VASS $\cV_1$ and $\cV_2$ one constructs two deterministic $\Z$-VASS $\cW_1$ and $\cW_2$ with
(i) $L(\cW_i)=\inv{h}(L(\cV_i))$ where $h\colon\Gamma^*\to\Sigma^*$ is a homomorphism and (ii) $\regsep{L(\cV_1)}{L(\cV_2)}$ if, and only if, $\regsep{L(\cW_1)}{L(\cW_2)}$ holds. Since our setting is technically slightly different, we include a proof in\ %
\ifFullVersion%
  \cref{app:parikh-automata}.%
\else%
  the full version.%
\fi%
\begin{restatable}[{\cite[Lemma~7]{RegSepParikh}}]{lemma}{determinize}\label{determinize}
  Regular separability for $\Z$-VASS reduces in polynomial time to the regular separability problem for deterministic $\Z$-VASS.
\end{restatable}

Next, we reduce regular separability for deterministic $\Z$-VASS to regular separability of two languages accepted by the same deterministic $\Z$-VASS, but with different sets of counters. To this end, given $d$-dim.\ $\Z$-VASS $\cV_1$ and $\cV_2$ we construct one $2d$-dim.\ $\Z$-VASS $\cV$ (using product construction) and two index sets $I_1,I_2\subseteq[1,2d]$ such that $L(\cV_i)=L(\cV,I_i)$. We include a detailed proof for our setting in\ %
\ifFullVersion%
  \cref{app:parikh-automata}.%
\else%
  the full version.%
\fi%
\begin{restatable}[{\cite[Proposition~1]{RegSepParikh}}]{lemma}{singleZVASS}\label{singlezvass}
  Regular separability for deterministic $\Z$-VASS reduces in polynomial time to the following:
  \begin{itemize}
    \item[] \emph{Given:} A $d$-dimensional deterministic $\Z$-VASS $\cV$ with two subsets $I_1,I_2\subseteq[1,d]$.
    \item[] \emph{Question:} Are the languages $L(\cV,I_1)$ and $L(\cV,I_2)$ regularly separable?
  \end{itemize}
\end{restatable}

\noindent Therefore, we now fix a $\Z$-VASS $\cV=(Q,\Sigma,T,\iota,f)$.

\subparagraph{Skeletons}
Now, we want to further simplify the regular separability problem. Concretely, we want to consider only runs in $\cV$ that are in some sense similar. We consider some base paths---so called \emph{skeletons}---in $\cV$. Two runs in $\cV$ are similar if they follow the same base path and only differ in the order and repetition of some cycles. We define the function $\skel\colon T^*\to T^*$ such that $\skel(r)=\rho$ for a path $r\in T^*$ in $\cV$ such that $\rho$ is a sub-path of the original path $r$ in which we keep the same set of visited states while removing all cycles that do not increase the set of visited states. Here, $\rho$ is called the \emph{skeleton} of $r$.

Let $t_1\cdots t_\ell\in T^*$ be a path in $\cV$, i.e., we have $t_i=(q_{i-1},a_i,\bx_i,q_i)\in T$ for each $1\leq i\leq\ell$. The map $\skel$ is defined inductively as follows: $\skel(\varepsilon)=\varepsilon$ and $\skel(t_1)=t_1$. For $1\leq i<\ell$ assume that $\skel(t_1\cdots t_i)=s_1\cdots s_j$ is already constructed and that $s_1\cdots s_j$ is a path ending in $q_i$. Now we consider the transition $t_{i+1}$. If there is no transition $s_k$ (with $0\leq k\leq j$) such that this transition ends in the state $q_{i+1}$, we set $\skel(t_1\cdots t_it_{i+1})=s_1\cdots s_jt_{i+1}$. Note that $s_1\cdots s_jt_{i+1}$ is a path ending in the state $q_{i+1}$.

Otherwise, let $0\leq k\leq j$ be maximal such that $s_k$ ends in $q_{i+1}$. Then $s_{k+1}\cdots s_jt_{i+1}$ is a cycle in $\cV$ (note that $s_{k+1}$ starts with $q_{i+1}$ since $s_1\cdots s_j$ is a path). If all states occurring in the cycle $s_{k+1}\cdots s_jt_{i+1}$ also occur in the path $s_1\cdots s_k$, then we set $\skel(t_1\cdots t_it_{i+1})=s_1\cdots s_k$, i.e., we omit the cycle $s_{k+1}\cdots s_jt_{i+1}$ in the skeleton. Note that the skeleton $s_1\cdots s_k$ is a path ending in $q_{i+1}$. Otherwise at least one state in the cycle does not occur in the path $s_1\cdots s_k$. In this case, we simply add $t_{i+1}$ resulting in $\skel(t_1\cdots t_it_{i+1})=s_1\cdots s_jt_{i+1}$ where $s_1\cdots s_jt_{i+1}$ is also a path ending in $q_{i+1}$. Note that any skeleton of $\cV$ has length at most quadratic in the number of transitions $|T|$ as shown in \cite[Lemma 10]{RegSepParikh}.

Let $\rho$ be a skeleton. A \emph{$\rho$-cycle} is a cycle that only visits states occurring in $\rho$; a \emph{$\rho$-run} is a run $r\in T^*$ with skeleton $\skel(r)=\rho$ (i.e., $r$ is obtained from $\rho$ by inserting $\rho$-cycles).
We write $L(\cV,I,\rho)$ for the set of all words in $L(\cV,I)$ accepted via $\rho$-runs.
\begin{restatable}[{\cite[Lemma~11]{RegSepParikh}}]{lemma}{skeletons}\label{lem:skeletons}
  We have $\regsep{L(\cV,I_1)}{L(\cV,I_2)}$ if, and only if, $\regsep{L(\cV,I_1,\rho)}{L(\cV,I_2,\rho)}$ holds for every skeleton $\rho$.
\end{restatable}

Although this was essentially shown in \cite[Lemma 11]{RegSepParikh}, our setting is strictly speaking slightly different (e.g.\ we have all short rather than only simple cycles), so we include a detailed proof in\ %
\ifFullVersion%
  \cref{app:parikh-automata}.\ %
\else%
  the full version.\ %
\fi%
Thus, it suffices to show that for a given skeleton $\rho$, one can decide regular inseparability of $L(\cV,I_1,\rho)$ and $L(\cV,I_2,\rho)$ in $\NP$. So, from now on, we fix a skeleton $\rho$ and simply write $L(I_i)$ for $L(\cV,I_i,\rho)$. Since we only consider runs that visit states that occur in $\rho$, we may also assume that $\cV$ consists only of the states occurring on $\rho$.
In particular, we only say \emph{cycle} instead of ``$\rho$-cycle''.

\subparagraph{Counting cycles} We now phrase a characterization of regular
separability from \cite{RegSepParikh} in our setting. It says that regular separability of the languages $L(I_1)$ and $L(I_2)$ is equivalent to recognizable separability of vectors that count cycles. Here, we only count \emph{short} cycles of length at most $|Q|$. This is possible since each cycle can be decomposed into short cycles. In the following, we fix the set $S\subseteq T^{\leq|Q|}$ of all \emph{short} cycles in $\cV$.\footnote{Although \cref{determinize,singlezvass,lem:skeletons,words-to-vectors} are essentially the same as in \cite{RegSepParikh}, we are working with \emph{short cycles}, whereas \cite{RegSepParikh} uses \emph{simple cycles}. This will be crucial later, because short cycles can be guessed on-the-fly in a finite automaton without storing the whole cycle.}

For $I\subseteq[1,d]$, we define: if $t=(p,a,\bx,q)\in T$ is a transition then the \emph{effect} $\Delta_I(t)$ of $t$ to the components in $I$ is $\Delta_I(t)=\restrict{\bx}{I}$, i.e.\ the projection of the counter update $\bx$ to $I$. If $r=t_1t_2\cdots t_\ell\in T^*$ is a path, then the \emph{effect} $\Delta_I(r)$ of $r$ to the components in $I$ is the sum of the effects of all transitions on this path, i.e. $\Delta_I(r)=\sum_{i=1}^\ell\Delta_I(t_i)$. Now, let $\bu\in\N^S$ be a multiset of short cycles. Then the \emph{effect} of $\bu$ to the components in $I$ is $\Delta_I(\bu)=\sum_{c\in S}\bu[c]\cdot\Delta_I(c)$. If $\bv\in\N^T$ is a multiset of transitions, then the \emph{effect} of $\bv$ to the components in $I$ is $\Delta_I(\bv)=\sum_{t\in T}\bv[t]\cdot\Delta_I(t)$. In case of $I=[1,d]$ we will also write $\Delta$ instead of $\Delta_I$. Finally, we define
\[ M(I) = \left\{\bu\in\N^S \,\middle|\, \Delta_I(\rho)+\Delta_I(\bu)=\bzero \right\}\,. \]
Hence, $M(I)$ is the set of multisets of short cycles such that inserting them into $\rho$ would lead to an accepting run with acceptance condition $I\subseteq[1,d]$. Since $M(I)$ is the solution set of linear Diophantine equations, it is hyperlinear (see\ %
\ifFullVersion%
  \cref{app:parikh-automata}\ %
\else%
  the full version\ %
\fi%
for a proof).
\begin{restatable}{observation}{hyperlinear}
  Let $I\subseteq[1,d]$. Then $M(I)$ is hyperlinear, i.e., $M(I)=B+V^*$ for two finite sets $B,V\subseteq\N^S$.
\end{restatable}

The following equivalence between regular separability of the languages $L(I_i)$ and recognizable separability of the (hyperlinear) sets $M(I_i)$ was shown in \cite[Lemma~12]{RegSepParikh}. It is straightforward to adapt it to our situation (see\ %
\ifFullVersion%
  \cref{app:parikh-automata}%
\else%
  the full version%
\fi%
).
\begin{restatable}{lemma}{wordsToVectors}\label{words-to-vectors}
  We have $\regsep{L(I_1)}{L(I_2)}$ if, and only if, $\regsep{M(I_1)}{M(I_2)}$.
\end{restatable}

\subparagraph{Reducing inseparability to intersection}
At this point, our proof deviates from the approach of \cite{RegSepParikh}.
According to \cref{words-to-vectors}, it remains to decide whether $\regsep{M(I_1)}{M(I_2)}$, where $M(I_1)$ and $M(I_2)$ are sets of vectors of dimension $|S|$, which is exponential. In \cref{characterization-recognizable-separability2}, we saw that recognizable separability of vector sets $A+U^*$ and $B+V^*$ reduces to intersection emptiness of $A+U^*+V_J^*$ and $B+V^*+U_J^*$, where $J$ is a subset of the twin-unbounded components. However, the exponential dimension of $M(I_1),M(I_2)$ means a direct translation into existential Presburger arithmetic would incur an exponential blowup. 

Instead, our key observation is that one can reduce inseparability to \emph{intersection emptiness of $\Z$-VASS}: The idea is to encode the intersecting vectors $\bu\in (A+U^*+V_J^*)\cap (B+V^*+U_J^*)$, where $M(I_1)=A+U^*$, $M(I_2)=B+V^*$, as \emph{words containing the participating cycles}. Thus, we guess a subset $J$ of the twin-unbounded components, and then construct in polynomial time two $\Z$-VASS $\cW_1$ and $\cW_2$ such that
\begin{align}
	L(\cW_1)&=\{ \#c_1\#c_2\cdots\# c_m \mid m\in\N,~c_1,\ldots,c_m\in S,~\Phi(c_1,\ldots,c_m)\in A+U^*+V_J^* \}, \label{language-w1} \\
	L(\cW_2)&=\{ \#c_1\#c_2\cdots\# c_m \mid m\in\N,~c_1,\ldots,c_m\in S,~\Phi(c_1,\ldots,c_m)\in B+V^*+U_J^* \}, \label{language-w2}
\end{align}
where for cycles $c_1,\ldots,c_m\in S$, the so-called \emph{Parikh vector}
$\Phi(c_1,\ldots,c_m)\in\N^S$ counts how many times each short cycle occurs in
$c_1,\ldots,c_m$: If $c\in S$, then $\Phi(c_1,\ldots,c_m)[c]$ is the number of
indices $i\in[1,m]$ with $c_i=c$. Note that then clearly, $(A+U^*+V_J^*)\cap
(B+V^*+U_J^*)\ne\emptyset$ if and only if $L(\cW_1)\cap L(\cW_2)\ne\emptyset$.

The main challenge in constructing $\cW_1$ and $\cW_2$ is to guess a subset $J$
of twin-unbounded components, and for the $\Z$-VASS to verify that a given
cycle belongs to $J$, without being able to store an entire cycle in its state.
To solve this, we we will characterize the twin-unbounded cycles in terms of
its set of occurring transitions.

\subparagraph{Characterizing twin-unbounded cycles}
We define for any $\hat{T}\subseteq T$ the set
\[ S[\hat{T}]=\left\{c\in \hat{T}^{\leq|Q|} \,\middle|\, \text{$c$ is a cycle}\right\}. \]
Thus, $S[\hat{T}]\subseteq S$ is the set of all short cycles that consist solely of transitions from $\hat{T}$.

Our characterization uses an adaptation of the notion of ``cancelable productions'' in $\Z$-grammars used in \cite{FOSSACS23}.
We define the homomorphism $\partial\colon \N^T\to\Z^Q$ as follows: for each transition $t=(p,a,\bx,q)\in T$ we set $\partial(\be_t)=\be_q-\be_p$, where $\be_t\in\N^T$ and $\be_p,\be_q\in\N^Q$ are unit vectors.
Thus, $\partial(\bu)[q]$ is the number of incoming transitions to $q$, minus the number of outgoing edges from $q$, weighted by the coefficients in $\bu$. 
A \emph{flow} is a vector $\bmf\in\N^T$ with $\partial(\bmf)=\bzero$.  The
following is a standard fact in graph theory. For a proof that even applies to
context-free grammars (rather than automata), see \cite[Theorem
3.1]{DBLP:journals/fuin/Esparza97}.
\begin{restatable}{lemma}{flows}\label{lem:flows}
	A vector $\bmf\in\N^T$ is a flow if and only if it is a sum of (the Parikh vectors of) cycles.
\end{restatable}

The following notion will be key in characterizing which cycles are twin-unbounded for $M(I_1)$ and $M(I_2)$.
A transition $t\in T$ is \emph{bi-cancelable} if there exist flows $\bmf_1,\bmf_2\in\N^T$  such that\label{def-bicancel}
		(i)~$\Delta_{I_1}(\bmf_1)=\bzero$ and $\Delta_{I_2}(\bmf_2)=\bzero$,
		(ii)~$t$ occurs in both $\bmf_1$ and in $\bmf_2$, and
		(iii)~$\supp(\bmf_1)=\supp(\bmf_2)$.
In other words, $t$ is bi-cancelable if it is part of two flows $\bmf_1$ and $\bmf_2$ with the same support and with effect zero (wrt.\ the components $I_1$ resp.\ $I_2$).

\begin{restatable}{lemma}{strongUnboundBiCancel}\label{lem:strongUnboundBiCancel}
	A cycle $c\in S$ is twin-unbounded for $M(I_1)$ and $M(I_2)$ if, and only if, every
	transition in $c$ is bi-cancelable.
\end{restatable}
\begin{proof}
  For the ``only if'' direction, suppose that $c$ is twin-unbounded for $M(I_1)$ and $M(I_2)$. Then by definition there exist sums of period vectors $\bu_1,\bu_2\in\N^S$ of $M(I_1)$ resp.\ $M(I_2)$ with $c\in\supp(\bu_1)=\supp(\bu_2)$. Define $\bmf_i=\tau(\bu_i)\in\N^T$, where $\tau\colon\N^S\to\N^T$ maps cycles to the number of occurrences of each transition in these cycles. Then clearly $\bmf_i$ are flows with $\Delta_{I_i}(\bmf_i)=\Delta_{I_i}(\bu_i)=\bzero$, $c$ occurs in both $\bmf_1$ and in $\bmf_2$, and $\supp(\bmf_1)=\supp(\bmf_2)$. Hence, all transitions in $c$ are bi-cancelable.
  
	For the ``if'' direction, suppose a cycle $c\in S$ only contains
	bi-cancelable transitions and write $c=t_1\cdots t_n$ for $t_1,\ldots,t_n\in T$. For each $t_i$,
	there are flows $\bmf_{i,1}$ and $\bmf_{i,2}$ witnessing that $t_i$ is
	bi-cancelable. Notice that $\bmf_1:=\bmf_{1,1}+\cdots+\bmf_{n,1}$ and
	$\bmf_2=\bmf_{1,2}+\cdots+\bmf_{n,2}$ are flows as well and they have
	$\supp(\bmf_1)=\supp(\bmf_2)$.
	As flows, both $\bmf_1$ and $\bmf_2$ can be written as a sum of
	cycles: There are $\bu_1,\bu_2\in\N^S$ with
	$\tau(\bu_1)=\bmf_1$ and $\tau(\bu_2)=\bmf_2$. 
	Observe that $\Delta_{I_1}(\bu_1)=\Delta_{I_2}(\bu_2)=\bzero$, meaning $\bu_1$ and $\bu_2$ are sums of period vectors of $M(I_1)$ and $M(I_2)$, respectively. If we knew that $c$ occurs in both $\bu_1$ and in $\bu_2$, and $\bu_1,\bu_2$ had the same support, we could conclude twin-unboundedness of $c$. Since $\bu_1,\bu_2$ may not have these properties, we will now modify them.
	Consider the set $S'=S[\supp(\bmf_1)]=S[\supp(\bmf_2)]$; hence $S'$ is the set of short cycles $u\in T^*$ such that $\supp(u)\subseteq \supp(\bmf_1)=\supp(\bmf_2)$. By the choice of $\bmf_1$ and $\bmf_2$, we know $c\in S'$. For each cycle $u\in S'$, the vectors $\bmf_1-\tau(\be_u)$ and $\bmf_2-\tau(\be_u)$ are again flows, because $\tau(\be_u)$ is a flow. Now observe
	\[ \sum_{u\in S'} \tau(\be_u)+(\bmf_i-\tau(\be_u))=|S'|\cdot\bmf_i \]
	for $i=1,2$ (cf.~\cref{fig:flow}). Hence, the flow $|S'|\cdot\bmf_i$ can be written as a sum of cycles in which each cycle from $S'$ occurs. Moreover, in this sum, every occurring cycle belongs to $S'$. This means, $\bu'_1,\bu'_2$ have the same support $S'$, which includes $c$. Moreover, since $\tau(\bu'_i)=|S'|\cdot\bmf_i$, we know that $\Delta_{I_i}(\bu'_i)=\bzero$, meaning $\bu'_i$ is a sum of period vectors of $M(I_i)$, for $i=1,2$.
  This means, $c$ is indeed twin-unbounded for $M(I_1)$ and $M(I_2)$.\qedhere
  
  \begin{figure}
    \begin{center}
      \begin{tikzpicture}[>=Latex,every node/.style={draw,circle,fill=black,inner sep=1pt}]
        \foreach \a in {0,60,...,300} {
          \node (\a) at (\a:1cm) {};
        }
        \node (a1) at (2,0.5) {};
        \node (a2) at (2,-0.5) {};
        \node (a3) at (2.5,0) {};
        
        \draw[->,thick,blue]
          (60) edge (0)
          (120) edge (60)
          (180) edge (120)
          (240) edge (180)
          (300) edge (240)
          (0) edge (300);
        
        \draw[->,red,every edge/.append style={bend right,looseness=2,out=-90,in=-90},dashed] 
          (0)   edge (60)
          (60)  edge (120)
          (120) edge (180)
          (180) edge (240)
          (240) edge (300)
          (300) edge (0);
        
        \draw[->,red,every edge/.append style={loop,looseness=20,min distance=40},dashed]
          (a1) edge[out=135,in=45] (a1)
          (a2) edge[out=-45,in=-135] (a2)
          (a3) edge[out=45,in=-45] (a3);
      \end{tikzpicture}
    \end{center}
    \caption{The flow $\tau(\be_u)+(\bmf_i-\tau(\be_u))$ where the cycle $u$ is depicted in bold blue and the cycles of the flow $\bmf_i-\tau(\be_u)$ are depicted in red. Note that the new flower shaped cycle is not necessarily short, but can be easily split into short cycles.\label{fig:flow}}
  \end{figure}
\end{proof}

To construct our $\Z$-VASS $\cW_1$ and $\cW_2$, we first guess a set of
transitions and then verify that all of them are bi-cancelable. For the
verification, we translate the definition of bi-cancelability into an existential
Presburger formula $\varphi_t$ which is satisfiable if, and only if, $t$ is
bi-cancelable (see\ %
\ifFullVersion%
  \cref{app:parikh-automata}%
\else%
  the full version%
\fi%
).
\begin{restatable}{lemma}{biCancelNP}\label{lem:bicancelNP}
	Given a transition $t\in T$, we can decide in $\NP$ whether it is bi-cancelable.
\end{restatable}

\subparagraph{Constructing the $\Z$-VASS}\label{par:emptiness}
Let us now describe in more detail how the $\Z$-VASS $\cW_1$ and $\cW_2$ are constructed.
Instead of literally guessing the set $J$ of twin-unbounded cycles (which could require exponentially many bits), we guess a set $\hat{T}\subseteq T$ of transitions in $\cV$ and then verify in $\NP$ that they are all bi-cancelable using \cref{lem:bicancelNP}. Then, we build $\Z$-VASS that satisfy \cref{language-w1,language-w2} for the specific choice $J=S[\hat{T}]$. This means, we will have
\begin{align}
	L(\cW_1)&=\{ \#c_1\#c_2\cdots\# c_m \mid m\in\N, c_1,\ldots,c_m\in S, \Phi(c_1,\ldots,c_m)\in A+U^*+V_{S[\hat{T}]}^* \} \label{language-w1-specific} \\
	L(\cW_2)&=\{ \#c_1\#c_2\cdots\# c_m \mid m\in\N, c_1,\ldots,c_m\in S, \Phi(c_1,\ldots,c_m)\in B+V^*+U_{S[\hat{T}]}^* \} \label{language-w2-specific}
\end{align}
and from now on, we will also write $J=S[\hat{T}]$.
Note that the result of our algorithm is correct, even when the guess for $\hat{T}$ is not the \emph{entire} set of bi-cancelable transitions: when $L(\cW_1)$ intersects $L(\cW_2)$ for some choice of $\hat{T}$, it will do so for any larger choice of $\hat{T}$.

\subparagraph{Ensuring membership in $\boldsymbol{A+U^*}$}
The idea for constructing $\cW_1$ (and analogously $\cW_2$) is simple. For each cycle in the input, it guesses whether it belongs to $A+U^*$ or to $V_{S[\hat{T}]}^*$. Let $\bu_0\in\N^S$ and $\bu_1\in\N^S$ be the collection of cycles guessed to be in $A+U^*$ and in $V_{S[\hat{T}]}^*$, respectively. To make sure that $\bu_0\in A+U^*$, we note that $\bu_0\in A+U^*$ is equivalent to $\Delta_{I_1}(\bu_0)+\Delta_{I_1}(\rho)=\bzero$, where $\rho$ is the skeleton guessed earlier in the algorithm. Thus, we can use $|I_1|$ counters to sum up the effect of the cycles $\bu_0$ and add $\Delta_{I_1}(\rho)$ once in the end. Hence, these counters being zero in the end is equivalent to $\bu_0\in A+U^*$.

\subparagraph{Ensuring membership in $\boldsymbol{V_{S[\hat{T}]}^*}$}
To make sure that $\bu_1\in V_{S[\hat{T}]}^*$, we note that this is equivalent to $\Delta_{I_2}(\bu_1)=\bzero$ and $\supp(\bu_1)\subseteq S[\hat{T}]$. Thus, our $\Z$-VASS has a separate set of $|I_2|$ counters that carry the total effect of all the cycles in $\bu_1$. Moreover, it is easy to check that all cycles in $\bu_1$ only use transitions in $\hat{T}$.

Note that membership in $B+V^*$ and in $U_{S[\hat{T}]}^*$ are checked similarly.
With this polynomial-time construction of $\cW_1$ and $\cW_2$, we are ready to prove \cref{main-parikh}:
\begin{proof}[Proof of \cref{main-parikh}]
	We give an $\NP$ algorithm for regular inseparability of two $\Z$-VASS (which
	can be obtained from Parikh automata in logarithmic space
	\cite[Corollary~1]{FOSSACS23}).
  
	Let $\cV_1$ and $\cV_2$ be two $d$-dimensional $\Z$-VASS. From $\cV_1$ and
	$\cV_2$ we can compute a single $2d$-dimensional deterministic
	$\Z$-VASS $\cV$ and two sets $I_1,I_2\subseteq[1,2d]$ in polynomial
	time such that $\regsep{L(\cV_1)}{L(\cV_2)}$ holds if, and only if,
	$\regsep{L(\cV,I_1)}{L(\cV,I_2)}$ (\cref{determinize,singlezvass}).
	According to \cref{lem:skeletons} we have
	$\regsep{L(\cV,I_1)}{L(\cV,I_2)}$ if, and only if,
	$\regsep{L(\cV,I_1,\rho)}{L(\cV,I_2,\rho)}$ for each skeleton $\rho$ in
	$\cV$ holds. So, we guess a skeleton $\rho$ and check regular
	inseparability of $L(\cV,I_1,\rho)$ and $L(\cV,I_2,\rho)$ certifying
	regular inseparability of $L(\cV,I_1)$ and $L(\cV,I_2)$.
  
	Additionally, we will guess a set $\hat{T}\subseteq T$ of transitions
	and verify in $\NP$ that all of them are bi-cancelable
	(\cref{lem:bicancelNP}). Then we can construct in polynomial time two
	$\Z$-VASS $\cW_1$ and $\cW_2$ such that \eqref{language-w1-specific}
	and \eqref{language-w2-specific} hold. If $L(\cW_1)\cap L(\cW_2)\ne\emptyset$, the algorithm reports ``inseparable''. For this, it uses
	a simple product construction to obtain a $\Z$-VASS $\cW$ for the
	intersection $L(\cW_1)\cap L(\cW_2)$, and decide in $\NP$ whether an
	accepting configuration can be reached in $\cW$.

	This is sound: We have $L(\cW_1)\cap
	L(\cW_2)\ne\emptyset$ if and only if $(A+U^*+V_J^*)\cap
	(B+V^*+U_J^*)\ne\emptyset$ for $J=S[\hat{T}]$; and by
	\cref{words-to-vectors}, we know that the latter rules out
	$\regsep{M(I_1)}{M(I_2)}$. For completeness, note that if $\regsep{M(I_1)}{M(I_2)}$ does not hold,
	then there exists a choice for $\hat{T}$ such that $L(\cW_1)\cap
	L(\cW_2)\ne\emptyset$: Take the set of all bi-cancelable transitions.
\end{proof}

\label{beforebibliography}
\newoutputstream{pages}
\openoutputfile{main.pages.ctr}{pages}
\addtostream{pages}{\getpagerefnumber{beforebibliography}}
\closeoutputstream{pages}

\bibliography{main.bib}

\ifFullVersion
\newpage

\begin{appendix}
\section{Omitted proofs of Section~\ref{sec:hyperlinear}}\label{app:hyperlinear}
\subgroupSeparability*
\begin{proof}
  Consider the quotient group $\Z^d/A$. It is finitely generated and abelian and thus isomorphic to a group $\bigoplus_{j=1}^n \Z/r_j\Z$ for some numbers $r_1,\ldots,r_n\in\N$. The projection map $\pi\colon \Z^d\to\Z^d/A$ can thus be composed with the isomorphism above to yield a morphism $\psi\colon \Z^d\to\bigoplus_{j=1}^n\Z/r_j\Z$ with $\ker\psi=A$. Since $\bu\notin A$ and thus $\psi(\bu)\ne 0$, say the $j$-th component of $\psi(\bu)$ is not zero. We distinguish two cases:
  \begin{enumerate}[(1)]
    \item If $r_j>0$, then we can choose $\varphi\colon\Z^d\to\Z/r_j\Z$ to be $\psi$ followed by the projection to the $j$-th component.
    \item If $r_j=0$, then $\Z/r_j\Z=\Z$ and thus the $j$-th component of $\psi(\bu)$ is an integer $k\in\Z$. We pick some $s>|k|$ and let $\varphi\colon\Z^d\to\Z/s\Z$ yield the $j$-th component of $\psi$, modulo $s$.
  \end{enumerate}
  These choices clearly satisfy $\varphi(A)=0$ and $\varphi(\bu)\ne 0$.
\end{proof}

\section{Omitted proofs of Section~\ref{section-epa}}\label{app:epa}
\lemRepSepUnion*
\begin{proof}
  Assume $\regsep{K}{L}$. Then there is a recognizable sets $S\subseteq M$
  separating $K$ and $L$. Let $1\leq i\leq n$ be arbitrary. Since
  $K_i\subseteq K$ holds, the set $S$ is also a separator of $K_i$
  and $L$, i.e., $\regsep{K_i}{L}$ for all $1\leq i\leq n$.
  
  Conversely, assume $\regsep{K_i}{L}$ for all $1\leq i\leq n$. Then
  there are recognizable sets $S_i\subseteq M$ separating $K_i$ and $L$.
  Set $S:=\bigcup_{1\leq i\leq n}S_i$. Then $S$ is recognizable (according
  to the closure properties of the class of recognizable sets). We also have
  \[K=\bigcup_{1\leq i\leq n}K_i\subseteq\bigcup_{1\leq i\leq n}S_i=S\]
  and
  \[
    L\cap S = L\cap\left(\bigcup_{1\leq i\leq n}S_i\right)
            = \bigcup_{1\leq i\leq n}(L\cap S_i)
            = \bigcup_{1\leq i\leq n}\emptyset
            = \emptyset\,.
  \]
  In other words, we $S$ is a recognizable separator of $K$ and $L$,
  i.e., $\regsep{K}{L}$.
\end{proof}

\inseparabilityViaInequalities*
\begin{proof}
	We apply \cref{characterization-recognizable-separability2}. To this
	end, we use the standard hyperlinear representation for solution sets
	of systems of linear Diophantine equalities. Let $A_0\subseteq\N^r$
	be the set of all (component-wise) minimal solutions to $A\bx=\bb$, and let
	$U\subseteq\N^r$ be the set of all minimal solutions to
	$A\bx=\bzero$. Then it is well known that
	$K=\pi(A_0+U^*)=\pi(A_0)+\pi(U)^*$. In the same way, we obtain a hyperlinear
	representation $L=\pi(B_0+V^*)=\pi(B_0)+\pi(V)^*$.
	Then, the
  \lcnamecref{inseparability-via-inequalities} follows from
  \cref{characterization-recognizable-separability2}.
  
  Indeed, observe that then $\pi(U)^*$ is exactly the set of $\pi(\bp)\in\N^d$
  with $A\bp=\bzero$. Likewise, $\pi(V)^*$ is exactly the set of
  $\pi(\bq)\in\N^d$ with $C\bq=\bzero$. Therefore, if $J\subseteq[1,d]$
  is the set of twin-unbounded components of $K,L$, and $U_J,V_J$ are
  defined as in \cref{characterization-recognizable-separability2}, then
  $\pi(U_J)^*$ consists of exactly those $\pi(\bu)$ for which (i)~there
  are $\bp,\bq\in\N^r$ with $A\bp=\bzero$ and $C\bq=\bzero$ with
  $\supp(\pi(\bu))\subseteq\supp(\pi(\bp))=\supp(\pi(\bq))\subseteq J$, and
  (ii)~$A\bu=\bzero$. The set $\pi(V_J)^*$ has an analogous description.
  
  Thus, if $\bp,\bq,\bu,\bv,\bx,\by\in\N^r$ exist as in the
  \lcnamecref{inseparability-via-inequalities}, then clearly
  $\pi(\bx+\bv)=\pi(\by+\bu)$ lies in the intersection
  $(\pi(A_0)+\pi(U)^*+\pi(V_J)^*)\cap (\pi(B_0)+\pi(V)^*+\pi(U_J)^*)$. 
  
  Conversely, an element in the intersection
  $(\pi(A_0)+\pi(U)^*+\pi(V_J)^*)\cap (\pi(B_0)+\pi(V)^*+\pi(U_J)^*)$ can
  be written as $\pi(\bx+\bv)=\pi(\by+\bu)$, such that $A\bx=\bb$,
  $C\by=\bd$, and there are $\bp_1,\bq_1\in\N^r$ witnessing
  $\bu\in U_J^*$ and also $\bp_2,\bq_2\in\N^r$ witnessing
  $\bv\in V_J^*$. This means,
  $\supp(\pi(\bu))\subseteq\supp(\pi(\bp_1))=\supp(\pi(\bq_1))$ and
  $A\bp_1=\bzero$ and $C\bq_1=\bzero$, but also
  $\supp(\bv)\subseteq\supp(\pi(\bp_2))=\supp(\pi(\bq_2))$ and
  $A\bp_2=\bzero$ and $C\bq_2=\bzero$. But then we can use
  $\bp:=\bp_1+\bp_2$ and $\bq:=\bq_1+\bq_2$ to satisfy the requirements
  of the \lcnamecref{inseparability-via-inequalities}.
\end{proof}

\section{Omitted proofs of Section~\ref{section-parikh-automata}}\label{app:parikh-automata}

\begin{lemma}\label{lem:alphMorph}
  Let $K,L\subseteq\Sigma^*$ be two languages and $h\colon\Gamma^*\to\Sigma^*$ be an alphabetic morphism\footnote{A morphism $h\colon\Gamma^*\to\Sigma^*$ is \emph{alphabetic} if $|h(a)|\leq1$ holds for each letter $a\in\Gamma$.}. If $K'\subseteq\inv{h}(K)$ with $h(K')=K$, then we have
  \[\regsep{K}{L} \iff \regsep{K'}{\inv{h}(L)}\,.\]
\end{lemma}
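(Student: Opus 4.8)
The plan is to prove both implications by transporting a regular separator along the morphism $h$, exploiting that the class of regular languages---i.e.\ the recognizable subsets of a free monoid---is effectively closed both under taking morphic images and under taking preimages under morphisms. So for the direction from left to right I would start with a regular separator $S$ of $K$ and $L$ and show that $\inv{h}(S)$ is a regular separator of $K'$ and $\inv{h}(L)$; for the converse I would start with a regular separator $S'$ of $K'$ and $\inv{h}(L)$ and show that $h(S')$ is a regular separator of $K$ and $L$.

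In more detail, for ``$\regsep{K}{L}\Rightarrow\regsep{K'}{\inv{h}(L)}$'': given regular $S$ with $K\subseteq S$ and $S\cap L=\emptyset$, the set $\inv{h}(S)$ is regular by closure under inverse morphisms, it contains $K'$ since $K'\subseteq\inv{h}(K)\subseteq\inv{h}(S)$, and it is disjoint from $\inv{h}(L)$ since $\inv{h}(S)\cap\inv{h}(L)=\inv{h}(S\cap L)=\inv{h}(\emptyset)=\emptyset$. For the converse direction: given regular $S'$ with $K'\subseteq S'$ and $S'\cap\inv{h}(L)=\emptyset$, the set $h(S')$ is regular by closure under morphic images; it contains $K$ because $K=h(K')\subseteq h(S')$; and it is disjoint from $L$, since any $w\in h(S')\cap L$ would be of the form $w=h(v)$ with $v\in S'$, whence $v\in\inv{h}(L)$, contradicting $S'\cap\inv{h}(L)=\emptyset$.

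I do not anticipate a genuine obstacle here: the whole argument is essentially bookkeeping with the two closure properties. The only points where I would be slightly careful are (i) using the full hypothesis $h(K')=K$, rather than merely $K'\subseteq\inv{h}(K)$, in the converse direction in order to obtain $K\subseteq h(S')$, and (ii) noting that regular languages are closed under arbitrary (in particular, possibly erasing) morphic images, so that $h(S')$ is regular without any hypothesis on $h$---though in our intended application $h$ is alphabetic, so this is not even at issue.
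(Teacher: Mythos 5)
Your proposal is correct and follows essentially the same argument as the paper: pull the separator back along $\inv{h}$ for one direction and push it forward via $h$ (using $h(K')=K$) for the other, relying on closure of regular languages under morphisms and inverse morphisms. The points you flag for care are exactly the ones the paper's proof also uses.
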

\begin{proof}
  First, assume $\regsep{K}{L}$. Then there is a regular separator $R\subseteq\Sigma^*$ of $K$ and $L$, i.e., we have $K\subseteq R$ and $L\cap R=\emptyset$. Set $R':=\inv{h}(R)\subseteq\Gamma^*$. $R'$ is regular since the class of regular languages is closed under inverse morphisms. We also have $K'\subseteq\inv{h}(K)\subseteq\inv{h}(R)=R'$. Additionally, we have $\inv{h}(L)\cap\inv{h}(R)=\emptyset$ since the existence of an element $w\in \inv{h}(L)\cap\inv{h}(R)$ would imply $h(w)\in L\cap R$. This means, $R'$ is a regular separator of $K'$ and $\inv{h}(L)$, i.e., $\regsep{K'}{\inv{h}(L)}$.
  
  Conversely, assume $\regsep{K'}{\inv{h}(L)}$. Then there exists a regular separator $R'\subseteq\Gamma^*$ of $K'$ and $\inv{h}(L)$, i.e., we have $K'\subseteq R'$ and $\inv{h}(L)\cap R'=\emptyset$. Set $R:=h(R')$ which is a regular language since the class of regular languages is also closed under morphisms. Then we have $K=h(K')\subseteq h(R')=R$. Also $L\cap R=\emptyset$ holds: towards a contradiction suppose there is $w\in L\cap R$. From $w\in R=h(R')$ follows the existence of a word $w'\in R'$ with $h(w')=w$. We also infer $w'\in\inv{h}(L)$ from $w\in L$. Hence, we have $w'\in\inv{h}(L)\cap R'=\emptyset$---a contradiction. All in all, we proved that $R$ is a regular separator of $K$ and $L$, i.e., $\regsep{K}{L}$.
\end{proof}

\determinize*
\begin{proof}
  The proof of this lemma is similar to \cite[Lemma~7]{RegSepParikh}: let $\cV_i=(Q_i,\Sigma,T_i,\iota_i,f_i)$ with $i=1,2$ be two $\Z$-VASS. From $\cV_1$ and $\cV_2$ we will construct two $\Z$-VASS $\cV_i'=(Q_i,\Gamma,T_i',\iota_i,f_i)$ such that $\cV_1'$ is deterministic and we have
  \[\regsep{L(\cV_1)}{L(\cV_2)} \iff \regsep{L(\cV_1')}{L(\cV_2')}\,.\]
  We will obtain the determinism of $\cV_1'$ by making each label of a transition in $\cV_1$ unique. So, set $\Gamma=T_1$. $T_1'$ is obtained from $T_1$ by replacing each transition $t=(p,a,\bx,q)\in T_1$ by the new transition $(p,t,\bx,q)$. Using this translation we also obtain a morphism $h\colon\Gamma^*\to\Sigma^*$ with $h((p,a,\bx,q))=a$ for each transition $(p,a,\bx,q)\in\Gamma=T_1$. Then we obtain $\cV_2'$ from $\cV_2$ with $L(\cV_2')=\inv{h}(L(\cV_2))$ by replacing each label $a\in\Sigma_\varepsilon$ of a transition in $T_2'$ with all labels $t\in T_1$ with $h(t)=a$. Additionally, we add loops labeled with $t\in T_1$ such that $h(t)=\varepsilon$ to any state of $L(\cV_2)$. Formally, this is the following set of transitions:
  \begin{align*}
    T_2'=&\phantom{{}\cup{}}\{(p,t,\bx,q)\mid t\in T_1,(p,h(t),\bx,q)\in T_2\}\\
         &\cup\{(p,t,\bzero,q)\mid p,q\in Q,t\in T_1,h(t)=\varepsilon\}\,.
  \end{align*}
  Note that this is a well known construction for the application of the inverse of an alphabetic morphism and, hence, we have $L(\cV_2')=\inv{h}(L(\cV_2))$.
  
  Since each letter from $\Gamma$ occurs in exactly one transition of $\cV_1'$, this $\Z$-VASS is deterministic. Additionally, $\cV_1'$ and $\cV_2'$ can be constructed from $\cV_1$ and $\cV_2$ in polynomial time. It is also clear that the morphism $h$ is alphabetical. We can also prove the following properties:
  \begin{enumerate}
    \item $L(\cV_1')\subseteq\inv{h}(L(\cV_1))$: Let $w\in L(\cV_1')$. Then there is an accepting run $t_1't_2'\cdots t_\ell'$ in $\cV_1'$ with $t_i=(q_{i-1},t_i,\bx_i,q_i)\in T_1'$ for each $1\leq i\leq\ell$. In particular, we have $w=t_1t_2\cdots t_\ell\in T_1^*$. By definition of $\cV_1'$ we have $t_i=(q_{i-1},a_i,\bx_i,q_i)\in T_1$ for an $a_i\in\Sigma_\varepsilon$. But this means that $w=t_1t_2\cdots t_\ell$ is an accepting run in $\cV_1'$ labeled by $a_1a_2\cdots a_\ell$, i.e., $a_1a_2\cdots a_\ell\in L(\cV_1)$. Moreover, we have $h(w)=h(t_1t_2\cdots t_\ell)=a_1a_2\cdots a_\ell$ implying $w\in\inv{h}(a_1a_2\cdots a_\ell)\subseteq\inv{h}(L(\cV_1))$.
    \item $h(L(\cV_1'))=L(\cV_1)$: A word $w\in\Sigma^*$ is in $h(L(\cV_1'))$ if, and only if, there is a word $w'\in L(\cV_1')\subseteq\Gamma^*$ with $w=h(w')$. This is exactly the case if there is an accepting run $t_1't_2'\cdots t_\ell'$ in $\cV_1'$ that is labeled with $w'$, i.e., we have $t_i'=(q_{i-1},t_i,\bx_i,q_i)\in T_1'$ and $w'=t_1t_2\cdots t_\ell$. By construction this is equivalent to an accepting run $t_1t_2\cdots t_\ell$ in $\cV_1$ that is labeled with $h(w')=w$. But this is exactly the definition of $w\in L(\cV_1)$.
  \end{enumerate}
  Now, we can apply \cref{lem:alphMorph} and obtain
  \[\regsep{L(\cV_1)}{L(\cV_2)} \iff \regsep{L(\cV_1')}{L(\cV_2')}\,.\]
  
  In a final step, we can apply the same polynomial-time procedure to $\cV_2'$ and $\cV_1'$ to determinize $\cV_2'$. The result are two $\Z$-VASS $\cV_1''$ and $\cV_2''$ with
  \[\regsep{L(\cV_1)}{L(\cV_2)} \iff \regsep{L(\cV_1')}{L(\cV_2')} \iff \regsep{L(\cV_1'')}{L(\cV_2'')}\,.\]
  While $\cV_2''$ is deterministic by construction, it is not clear that the same holds for $\cV_1''$. However, due to the fact that $\cV_1'$ and $\cV_2'$ do not have any $\varepsilon$-transitions, the resulting morphism $h'\colon T_2'^*\to T_1^*$ is strictly alphabetical. Hence, $\cV_1''$ is also deterministic.
\end{proof}

\singleZVASS*
\begin{proof}
  Let $\cV_i=(Q_i,\Sigma,T_i,\iota_i,f_i)$ be two deterministic $d$-dimensional $\Z$-VASS. We apply the product construction and obtain a new deterministic $2d$-dimensional $\Z$-VASS $\cV_1\times\cV_2=(Q_1\times Q_2,\Sigma,T,(\iota_1,\iota_2),(f_1,f_2))$ with
  \[
    T=\left\{((p_1,p_2),a,(\bv_1,\bv_2),(q_1,q_2))\,\middle|\,\begin{matrix}
      (p_i,a,\bv_i,q_i)\in T_i\\
      \text{ for all }i=1,2
    \end{matrix}\right\}.
  \]
  
  We show now that $\regsep{L(\cV_1)}{L(\cV_2)}$ holds if, and only if, $\regsep{L(\cV_1\times\cV_2,[1,d])}{L(\cV_1\times\cV_2,[d+1,2d])}$.
  Let $\cA_i=(Q_i,\Sigma,\Delta_i,\iota_i,\{f_i\})$ with $\Delta_i=\{(p,a,q)\mid\exists\bv\in\Z^d\colon(p,a,\bv,q)\in T_i\}$ be the DFA obtained from $\cV_i$ (for $i=1,2$) by removing all counter updates from the transitions. Then we can observe that $L(\cV_1\times\cV_2,[1,d])=L(\cV_1)\cap L(\cA_2)$ and $L(\cV_1\times\cV_2,[d+1,2d])=L(\cV_2)\cap L(\cA_1)$ holds.
  
  Assume that $\regsep{L(\cV_1)}{L(\cV_2)}$ holds. Then there is a regular separator $R\subseteq\Sigma^*$ with $L(\cV_1)\subseteq R$ and $L(\cV_2)\cap R=\emptyset$. Since $L(\cV_1\times\cV_2,[1,d])=L(\cV_1)\cap L(\cA_2)\subseteq L(\cV_1)$ and, similarly, $L(\cV_1\times\cV_2,[d+1,2d])\subseteq L(\cV_2)$ holds, the regular language $R$ is also a separator of $L(\cV_1\times\cV_2,[1,d])$ and $L(\cV_1\times\cV_2,[d+1,2d])$.
  
  Conversely, let $R\subseteq\Sigma^*$ be a regular separator of $L(\cV_1\times\cV_2,[1,d])$ and $L(\cV_1\times\cV_2,[d+1,2d])$. Set $R'=(R\cap L(\cA_1))\cup(\Sigma^*\setminus L(\cA_2))$. Clearly the language $R'$ is regular. We also have
  \begin{align*}
    L(\cV_1) &= (L(\cV_1)\cap L(\cA_2))\cup(L(\cV_1)\cap\Sigma^*\setminus L(\cA_2))\\
    &= (L(\cV_1)\cap L(\cA_2)\cap L(\cA_1))\cup(L(\cV_1)\cap\Sigma^*\setminus L(\cA_2))\\
    &\subseteq (R\cap L(\cA_1))\cup(L(\cV_1)\cap\Sigma^*\setminus L(\cA_2))\\
    &\subseteq (R\cap L(\cA_1))\cup(\Sigma^*\setminus L(\cA_2))\\
    &= R'\,.
  \end{align*}
  Here, the second line holds since $L(\cV_1)\subseteq L(\cA_1)$ and the third one holds since $R$ is a separator.
  
  Additionally, by $L(\cV_2)\subseteq L(\cA_2)$ we have $L(\cV_2)\cap(\Sigma^*\setminus L(\cA_2))=\emptyset$ and
  \[(R\cap L(\cA_1))\cap L(\cV_2)=R\cap L(\cV_1\times\cV_2,[d+1,2d])=\emptyset\]
  implying $L(\cV_2)\cap R'=\emptyset$. Hence, $R'$ is a regular separator of $L(\cV_1)$ and $L(\cV_2)$.
\end{proof}

\skeletons*
\begin{proof}
  First, note that there are only finitely many skeletons: Clemente et al.\ proved in \cite[page~9]{RegSepParikh} that each skeleton has length at most $|Q|^2$. Hence, there are at most $|T|^{|Q|^2}$ many skeletons in $\cV$. It is also clear that $L(\cV,I)=\bigcup_{\text{skeleton $\rho$ of $\cV$}}L(\cV,I,\rho)$ holds.
  
  Let $\rho$ be a skeleton of $\cV$. There is also a regular language $K_\rho\subseteq\Sigma^*$ such that $L(\cV,I,\rho)=L(\cV,I)\cap K_\rho$ holds: we can obtain a finite automaton accepting $K_\rho$ from $\cV$ and $\rho$ by removing all counters and all edges and states that do not belong the skeleton $\rho$.
  
  Finally, we use the following well known fact:
  \begin{claim}
    Let $K_1,\ldots,K_n\subseteq\Sigma^*$ be regular languages partitioning $\Sigma^*$ and $L_1,L_2\subseteq\Sigma^*$ be two languages. Then we have $\regsep{L_1}{L_2}$ if, and only if, $\regsep{L_1\cap K_i}{L_2\cap K_i}$ holds for each $1\leq i\leq n$.
  \end{claim}
  Now, if the languages $K_i$ are the regular languages $K_\rho$ for any skeleton $\rho$ and $L_i=L(\cV,I_i)$ for $i=1,2$ we obtain that $\regsep{L(\cV,I_1)}{L(\cV,I_2)}$ holds if, and only if, $L(\cV,I_1,\rho)=L(\cV,I_1)\cap K_\rho$ is regular separable from $L(\cV,I_2)\cap K_\rho=L(\cV,I_2,\rho)$.
\end{proof}

\wordsToVectors*
\begin{proof}
  Before we prove the equivalence, let us introduce a map $\cycles\colon T^*\to\N^S$ such that for each $\rho$-run $r\in T^*$ we have $\cycles(r)=\bv\in\N^S$ if $r$ contains each $\rho$-cycle $c\in S$ exactly $\bv[c]$ times.
  
  Now, assume that $\regsep{L(I_1)}{L(I_2)}$ holds, i.e., there is a regular separator $R\subseteq\Sigma^*$ with $L(I_1)\subseteq R$ and $R\cap L(I_2)=\emptyset$. We will use \cref{lem:insepSim} to show that $M(I_1)$ and $M(I_2)$ are separable by a recognizable set. To this end, we will give a number $k\in\N\setminus\{0\}$ such that $\bv_1\nsim_k\bv_2$ holds for each $\bv_i\in M(I_i)$ implying the separability of $M(I_1)$ and $M(I_2)$.
  
  For two words $w_1,w_2\in\Sigma^*$ write $w_1\equiv_R w_2$ if $xw_1y\in R\iff xw_2y\in R$ for all $x,y\in\Sigma^*$ (i.e., $\equiv_R$ is the \emph{syntactic} or \emph{Myhill congruence} of $R$). Since $R$ is regular, the index of $\equiv_R$ is finite and, hence, there is a number $k\in\N\setminus\{0\}$ such that
  \begin{equation}
    w^k\equiv_R w^{2k} \quad \text{for each $w\in\Sigma^*$.}\label{eq:myhillNerode}
  \end{equation}
  We show now $\bv_1\nsim_k\bv_2$ for each $\bv_i\in M(I_i)$. Towards a contradiction, assume there are $\bv_i\in M(I_i)$ (for $i=1,2$) with $\bv_1\sim_k\bv_2$. We construct runs $r_i\in T^*$ such that $\skel(r_i)=\rho$ and $\cycles(r_i)=\bv_i$ hold. For a short $\rho$-cycle $c\in S$ choose a prefix $x_c$ of $\rho$ such that $\skel(x_cc)=x_c$ (note that for each cycle $c\in S$ such an $x_c$ exists). Let $c_1,\ldots,c_n$ be an enumeration of $S$ such that $|x_{c_1}|\leq|x_{c_2}|\leq\cdots\leq|x_{c_n}|$ holds. In the following we will write $x_i$ instead of $x_{c_i}$. Let $z_1,\ldots,z_{n+1}\in T^*$ such that $z_1=x_1$, $x_iz_{i+1}=x_{i+1}$ for each $1\leq i<n$, and $x_nz_{n+1}=\rho$, i.e., we have $\rho=z_1z_2\cdots z_{n+1}$. Set
  \[r_i:=z_1c_1^{\bv_i[c_1]}z_2c_2^{\bv_i[c_2]}\cdots z_nc_n^{\bv_i[c_n]}z_{n+1}\,.\]
  Clearly we have $\skel(r_i)=\rho$ and $\cycles(r_i)=\bv_i$ hold for $i=1,2$. We can also show that the labels $w_1,w_2\in\Sigma^*$ of the paths $r_1$ resp.\ $r_2$ satisfy $w_1\equiv_R w_2$ using $\bv_1\sim_k\bv_2$ and repeated usage of the equation \eqref{eq:myhillNerode}. However, $\bv_i\in M(I_i)$ implies $w_i\in L(I_i)$. Since $w_1\in L(I_1)\subseteq R$ we also have $w_2\in R$ (by $w_1\equiv_Rw_2$). Hence, we have $w_2\in R\cap L(I_2)=\emptyset$---a contradiction.
  
  Conversely, assume that $\regsep{M(I_1)}{M(I_2)}$ holds. Hence, there is a recognizable set $X\subseteq\N^S$ such that $M(I_1)\subseteq X$ and $X\cap M(I_2)=\emptyset$. Let $R\subseteq\Sigma^*$ be the set of all labels of $\rho$-runs $r\in T^*$ such that $\skel(r)=\rho$ with $\cycles(r)\in X$. We show that $R$ is a regular separator of $L(I_1)$ and $L(I_2)$. We have $L(I_1)\subseteq R$: let $w\in L(I_1)$. Then $w$ is the label of a $\rho$-run $r\in T^*$ with $\skel(r)=\rho$. But then we know $\cycles(r)\in M(I_1)\subseteq X$ implying $w\in R$.
  
  Now, suppose there is a word $w\in L(I_2)\cap R$. Then $w$ is the label of runs $r_1,r_2\in T^*$ with $\skel(r_i)=\rho$, $\cycles(r_1)\in M(I_2)$ and $\cycles(r_2)\in X$. Since $\cV$ is deterministic, we know that $r_1=r_2$ implying $\cycles(r_1)=\cycles(r_2)\in M(I_2)\cap X=\emptyset$---a contradiction. Hence, we have $L(I_2)\cap R=\emptyset$.
  
  Finally, we have to show that $R$ is regular. To this end, we construct a nondeterministic finite automaton that simulates $\rho$-runs by storing the image of the map $\skel$ in its state. While the set of all skeletons is finite, the set of vectors appearing in the image of $\skel$ is not necessarily bounded. However, since $X$ is recognizable and, hence, semilinear we can evaluate the condition $\cycles(r)\in X$ for a path $r\in T^*$ using only a finite memory. Concretely we guess a linear set $\bu+P^*\subseteq X$ where $\bu\in\N^S$ and $P\subseteq\N^S$ finite (recall that $X$ is a finite union of such linear sets). Additionally, let $P=\{\bp_1,\ldots,\bp_n\}$. The NFA stores in its memory vectors $\bu',\bp_1',\ldots,\bp_n'$ with $\bu'\leq\bu$ and $\bp_i'\leq\bp_i$ for all $1\leq i\leq n$. Whenever the simulation of $\skel$ detects a $\rho$-cycle, we increase one of the vectors $\bu',\bp_1',\ldots,\bp_n'$. If we reach the vector $\bp_i$ due to this increasing, we reset this vector to $\bzero$. The NFA accepts if its memory contains the skeleton $\rho$ and the (bounded) counter values $\bu,\bzero,\ldots,\bzero$. Clearly, this NFA accepts the language $R$. Hence, $R$ is a regular separator of $L(I_1)$ and $L(I_2)$.
\end{proof}

\hyperlinear*
\begin{proof}
  The equation $\Delta_{I_i}(\rho)+\Delta_{I_i}(\bu)=\bzero$ is a system of linear equations (over $\N^S$) and $M(I)$ is the set of solutions of this equation system. Since the equations are expressible in Presburger arithmetic, we obtain that $M(I)$ is semilinear \cite{GinS66a}. Hence, we have $M(I)=\bigcup_{1\leq i\leq k}\bu_i+V_i^*$ (where $\bu_i\in\N^S$ and $V_i\subseteq\N^S$ are finite). We can see that the vectors in $V_i$ are solutions of the homogeneous linear equation system $\Delta_{I_i}(\bv)=\bzero$ and the vectors $\bu_j$ satisfy the inhomogeneous system $\Delta_{I_i}(\bu_j)=-\Delta_{I_i}(\rho)$. Therefore, we have $\bu_i+\bv\in M(I)$ for each $1\leq i\leq k$ and $\bv\in\bigcup_{1\leq j\leq k}V_j^*$. According to this we can write the solution set $M(I)$ also as $B+V^*$ where $B=\{\bu_1,\ldots,\bu_k\}$ and $V=\bigcup_{1\leq i\leq k}V_i$. In other words, the set $M(I)$ is even hyperlinear.
\end{proof}

\biCancelNP*
\begin{proof}
	We construct an existential Presburger formula $\varphi_t$ which is satisfiable if, and only if, $t$ is bi-cancelable. Recall that $t$ is bi-cancelable if, and only if, there exist two flows $\bmf_1,\bmf_2\in\N^T$ such that the properties (i)--(iii) on page~\pageref{def-bicancel} hold. We express in the following these three properties as quantifier-free Presburger formulas using the variables $x_{t'}$ and $y_{t'}$ for each transition.
  \begin{enumerate}[(i)]
    \item $\psi_{1}=\bigwedge_{i\in[1,d]}\sum_{t'=(p,a,\bv,q)\in T}\bv[i]\cdot x_{t'}=0 \land \sum_{t'=(p,a,\bv,q)\in T}\bv[i]\cdot y_{t'}=0$
    \item $\psi_{2,t}=x_t>0\land y_t>0$
    \item $\psi_{3}=\bigwedge_{t'\in T}(x_{t'}>0 \longleftrightarrow y_{t'}>0)$
  \end{enumerate}
  Additionally, we have to express that $\bmf_1$ and $\bmf_2$ are flows. This is possible with the following formula:
  \[\psi_{0}=\bigwedge_{q\in Q}\sum_{t'=(p,a,\bv,q)\in T}x_{t'}=\sum_{t'=(q,a,\bv,p)\in T}x_{t'}\land \sum_{t'=(p,a,\bv,q)\in T}y_{t'}=\sum_{t'=(q,a,\bv,p)\in T}y_{t'}\,.\]
  Set $\varphi_t=\exists\bx,\by\colon\psi_0\land\psi_1\land\psi_{2,t}\land\psi_3$ where $\bx=(x_{t'})_{t'\in T}$ and $\by=(y_{t'})_{t'\in T}$ are $T$-vectors of variables. Clearly, $\varphi_t$ is satisfiable if, and only if, $t$ is bi-cancelable.
\end{proof}

\section{Construction of the $\Z$-VASS in Section~\ref{section-parikh-automata}}
We only show the construction of $\cW_1$. As described above, $\cW_1$ accepts a sequence\linebreak $\# c_1\# c_2\# \cdots \# c_m$ if $m\in\N$, $c_1,c_2,\ldots,c_m\in S$, and $\Phi(c_1,\ldots,c_m)\in A+U^*+V_J^*$ (where $J=S[\hat{T}]$). This is the case, iff there are vectors $\bu_0\in A+U^*$ and $\bu_1\in V_J^*$ with $\Phi(c_1,\ldots,c_m)=\bu_0+\bu_1$. Recall that $\bu_0\in A+U^*$ is equivalent to $\Delta_{I_1}(\bu_0)+\Delta_{I_2}(\rho)=\bzero$ and that $\bu_1\in V_J^*$ is equivalent to $\Delta_{I_2}(\bu_1)=\bzero$ and $\supp(\bu_1)\in S[\hat{T}]$ (i.e., all transitions in cycles of $\bu_1$ are in $\hat{T}$).

Now, $\cW_1$ is a $|I_1|+|I_2|$-dimensional $\Z$-VASS that will first read a sequence of (short) cycles. For each of these cycles $\cW_1$ guesses whether to count it in $\bu_0$ or $\bu_1$. Accordingly, it adds the effect of each cycle either to the first $|I_1|$ or the last $|I_2|$ counters. In the second case, it also checks the membership of each transition in $\hat{T}$. After reading all the cycles, it finally simulates the skeleton $\rho$ (without reading anything from the input). Since $\Z$-VASS accept with value $0$ in each counter, we will finally obtain $\Delta_{I_1}(\bu_0)+\Delta_{I_2}(\rho)=\bzero$, $\Delta_{I_2}(\bu_1)=\bzero$, and $\supp(\bu_1)\in S[\hat{T}]$.

Recall that $\cV=(Q,\Sigma,T,\iota,f)$ is a $d$-dimensional $\Z$-VASS, $\hat{T}\subseteq T$ is a set of bi-cancelable transitions, and $\rho$ is a skeleton from $\iota$ to $f$ visiting all states in $Q$. We construct a $|I_1|+|I_2|$-dimensional $\Z$-VASS $\cW=(Q',\Gamma,T',\iota,f)$ over the input alphabet $\Gamma=T\cup\{\#\}$ where $\#\notin T$ is a new symbol. The set of states $Q'$ contains (among others) the states $\{\iota,f\}$. We have a transition from $\iota$ to $f$ labeled with $\varepsilon$ and adding $(\Delta_{I_1}(\rho),\bzero)$ to the counters (note that since the skeleton $\rho$ is fixed for our construction, we can simulate it in one step). Additionally, we attach to the state $\iota$ the following two (disjoint) gadgets $\cG_b$ with $b\in\{0,1\}$ simulating short cycles. Here, the index $b$ indicates whether we add the effect of this cycle to the effect of $\bu_0$ or $\bu_1$. Concretely, $\cG_b$ is the following automaton:
\begin{itemize}
  \item the states of $\cG_b$ consist of two states from $Q$ and a bounded counter with values in $[1,|Q|]$, i.e., $\{(p,q,j)\mid p,q\in Q,1\leq j\leq|Q|\}$ is the set of states in $\cG_b$
  \item There are transitions from $\iota$ to each state $(q,q,|Q|)$ with label $\#$ and counter update $(\bzero,\bzero)$. Here, the first state recognizes in which state the simulation of the cycle began, the second one indicates the current state of the simulation, and the counter indicates maximum number of subsequent simulation steps.
  \item For each $1<j\leq|Q|$ we have a transition from $(p,q,j)$ to $(p,q',j-1)$ if $\cV$ has a transition $t=(q,a,\bx,q')\in T$. The label of the new transition is $t$ and the counter update is $(\restrict{\by_0}{I_1},\restrict{\by_1}{I_2})$ where $\by_b=\bx$ and $\by_{1-b}=\bzero$. If $b=1$, we want to simulate twin-unbounded cycles, only. Hence, we also require $t\in\hat{T}$.
  \item We also have transitions from $(p,q,j)$ to $\iota$ if $\cV$ has a transition $t=(q,a,\bx,p)\in T$. The label and the counter update are defined as above.
\end{itemize}
In other words, the gadget $\cG_b$ is actually the computation graph that is truncated to runs of length $\leq|Q|$. Note that each gadget has at most $|Q|^3$ many nodes implying that $\cW$ has polynomial size (in $|Q|$).

The $|I_2|+|I_1|$-dimensional $\Z$-VASS $\cW_2$ is constructed analogously---we only have to replace counter updates $\restrict{\bx}{I_1}$ by $\restrict{\bx}{I_2}$ and vice versa.
Now, we have to show that $L(\cW_1)$ and $L(\cW_2)$ accept the desired languages:

\begin{lemma}\label{app:zvassCorrect1}
  The following equations hold:
  \begin{align*}
    L(\cW_1)&=\{ \#c_1\#c_2\cdots\# c_m \mid m\in\N, c_1,\ldots,c_m\in S, \Phi(c_1,\ldots,c_m)\in A+U^*+V_{S[\hat{T}]}^* \} \\
    L(\cW_2)&=\{ \#c_1\#c_2\cdots\# c_m \mid m\in\N, c_1,\ldots,c_m\in S, \Phi(c_1,\ldots,c_m)\in B+V^*+U_{S[\hat{T}]}^* \}
  \end{align*}
\end{lemma}
\begin{proof}
  We only show the first equation.
  
  Let $w\in L(\cW_1)$ and let $r\in T'^*$ be a $w$-labeled accepting run of $\cW_1$. Clearly, there are $m\in\N$ and words $c_1,\ldots,c_m\in T^*$ with $w=\#c_1\#c_2\cdots\# c_m$. By construction, each $\#c_i$ is read in $r$ by one of the gadgets $\cG_0$ or $\cG_1$. These gadgets simulate runs of length $\leq|Q|$ starting in some state $p\in Q$ that are going back to this state. But these are exactly short cycles, i.e., $c_i\in S$.
  
  Next, for each $1\leq i\leq m$ choose $b_i\in\{0,1\}$ such that in $r$ the factor $\#c_j$ is read via the gadget $\cG_{b_i}$. Let $\bu_0,\bu_1\in\N^S$ be the following two vectors: for each $c\in S$ and $b\in\{0,1\}$ set $\bu_b[c]$ to the number of $1\leq i\leq m$ such that $c=c_i$ and $b=b_i$. Clearly, $\bu_0+\bu_1=\Phi(c_1,\ldots,c_m)$ is exactly the Parikh image of the cycles in $w$. We prove next that $\bu_0\in A+U^*$ and $\bu_1\in V_{S[\hat{T}]}^*$ hold.
  \begin{itemize}
    \item To prove $\bu_0\in A+U^*$ it suffices to show $\Delta_{I_1}(\bu_0)+\Delta_{I_1}(\rho)=\bzero$. We have
      \begin{align*}
           \Delta_{I_1}(\bu_0)+\Delta_{I_1}(\rho)
        &= \sum_{c\in S}\bu_0[c]\cdot\Delta_{I_1}(c)+\Delta_{I_1}(\rho)\\
        &= \sum_{1\leq i\leq m,b_i=0}\Delta_{I_1}(c_i)+\Delta_{I_1}(\rho) && \text{(by definition of $\bu_0$)}\\
        &= \Delta_{I_1}(r) && \text{(by definition of $\cW_1$)}\\
        &= \bzero && \text{(since $r$ is accepting)}
      \end{align*}
    \item We first prove $\Delta_{I_2}(\bu_1)=\bzero$:
      \begin{align*}
           \Delta_{I_2}(\bu_1)
        &= \sum_{c\in S}\bu_1[c]\cdot\Delta_{I_2}(c)\\
        &= \sum_{1\leq i\leq m,b_i=1}\Delta_{I_2}(c_i) && \text{(by definition of $\bu_1$)}\\
        &= \Delta_{I_2}(r) && \text{(by definition of $\cW_1$)}\\
        &= \bzero && \text{(since $r$ is accepting)}
      \end{align*}
      Towards the property $\supp(\bu_1)\subseteq S[\hat{T}]^*$ observe that cycles $c_i$ (with $1\leq i\leq m$) are only counted to $\bu_1$ if $b_i=1$ which is the case if the gadget $\cG_1$ reads this cycle. But $\cG_1$ checks that each transition is in $\hat{T}$ implying that only cycles $c_i\in\hat{T}^*$ are counted to $\bu_1$, i.e.\ $\supp(\bu_1)\subseteq S[\hat{T}]^*$. Finally, from $\Delta_{I_2}(\bu_1)=\bzero$ and $\supp(\bu_1)\subseteq S[\hat{T}]^*$ we infer that $\bu_1\in V_{S[\hat{T}]}^*$ holds.
  \end{itemize}
  Hence, $w$ satisfies all the properties of the right-hand side of the equation.
  
  Towards the converse inclusion, let $m\in\N$, $c_1,\ldots,c_m\in S$, $\Phi(c_1,\ldots,c_m)\in A+U^*+V_{S[\hat{T}]}^*$. We will show $\#c_1\#c_2\cdots\# c_m\in L(\cW_1)$. From $\Phi(c_1,\ldots,c_m)\in A+U^*+V_{S[\hat{T}]}^*$ we obtain the existence of two vectors $\bu_0\in A+U^*$ and $\bu_1\in V_{S[\hat{T}]}^*$ with $\Phi(c_1,\ldots,c_m)=\bu_0+\bu_1$. We construct a run from $\iota$ to $f$ in $\cW_1$ reading  $\#c_1\#c_2\cdots\# c_m$ as follows: for $1\leq i\leq m$ choose a value $b_i\in\{0,1\}$ such that $\bu_b[c]=|\{1\leq i\leq m\mid c_i=c,b_i=b\}|$ holds for all $b\in\{0,1\}$. Let $r_i\in T'^*$ be the (unique) run of $\cG_{b_i}$ with label $\#c_i$. Then $r=r_1r_2\cdots r_mt\in T'^*$ (where $t$ is the transition from $\iota$ to $f$) is a run from $\iota$ to $f$ in $\cW_1$ with label $\#c_1\#c_2\cdots\# c_m$. To show acceptance, we also need that $\Delta(r)=(\bzero,\bzero)$ holds.
  \begin{itemize}
    \item We first show $\Delta_{I_1}(r)=\bzero$:
      \begin{align*}
           \Delta_{I_1}(r)
        &= \sum_{i=1}^m \Delta_{I_1}(r_i) + \Delta_{I_1}(t)\\
        &= \sum_{i=1}^m \Delta_{I_1}(r_i) + \Delta_{I_1}(\rho) && \text{(by definition of $t$)}\\
        &= \sum_{1\leq i\leq m,b_i=0}\Delta_{I_1}(c_i) + \Delta_{I_1}(\rho) && \text{(since $\Delta_{I_1}(c_i)=\bzero$ if $b_i=1$)}\\
        &= \sum_{c\in S} \bu_0[c]\cdot \Delta_{I_1}(c) + \Delta_{I_1}(\rho) && \text{(since $\bu_0[c]=|\{1\leq i\leq m\mid c_i=c,b_i=0\}|$)}\\
        &= \Delta_{I_1}(\bu_0) + \Delta_{I_1}(\rho)\\
        &= \bzero && \text{(since $\bu_0\in A+U^*$)}
      \end{align*}
    \item Now we show $\Delta_{I_2}(r)=\bzero$:
      \begin{align*}
           \Delta_{I_2}(r)
        &= \sum_{i=1}^m \Delta_{I_2}(r_i) + \Delta_{I_2}(t)\\
        &= \sum_{i=1}^m \Delta_{I_2}(r_i) && \text{(by definition of $t$)}\\
        &= \sum_{1\leq i\leq m,b_i=1}\Delta_{I_2}(c_i) && \text{(since $\Delta_{I_2}(c_i)=\bzero$ if $b_i=0$)}\\
        &= \sum_{c\in S} \bu_1[c]\cdot \Delta_{I_2}(c) && \text{(since $\bu_1[c]=|\{1\leq i\leq m\mid c_i=c,b_i=1\}|$)}\\
        &= \Delta_{I_2}(\bu_1)\\
        &= \bzero && \text{(since $\bu_1\in V_{S[\hat{T}]}^*$)}
      \end{align*}
  \end{itemize}
  Hence, the run $r$ is accepting in $\cW_1$ implying $\#c_1\#c_2\cdots\# c_m\in L(\cW_1)$.
\end{proof}

\begin{lemma}\label{app:zvassCorrect2}
  We have $L(\cW_1)\cap L(\cW_2)=\emptyset$ if, and only if, $(A+U^*+V_{S[\hat{T}]}^*)\cap(B+V^*+U_{S[\hat{T}]}^*)=\emptyset$.
\end{lemma}
\begin{proof}
  Assume $L(\cW_1)\cap L(\cW_2)=\emptyset$. Then there is a word $w\in L(\cW_1)\cap L(\cW_2)$. By \cref{app:zvassCorrect1} there are $m\in\N$ and $c_1,\ldots,c_m\in S$ with
  \begin{enumerate}[(i)]
    \item $w=\#c_1\#c_2\cdots \#c_m$,
    \item $\Phi(c_1,\cdots,c_m)\in A+U^*+V_{S[\hat{T}]}^*$, and
    \item $\Phi(c_1,\cdots,c_m)\in B+V^*+U_{S[\hat{T}]}^*$.
  \end{enumerate}
  Hence, we have $\Phi(c_1,\cdots,c_m)\in (A+U^*+V_{S[\hat{T}]}^*)\cap(B+V^*+U_{S[\hat{T}]}^*)\neq\emptyset$.
  
  Conversely, assume $(A+U^*+V_{S[\hat{T}]}^*)\cap(B+V^*+U_{S[\hat{T}]}^*)\neq\emptyset$. Then there is a vector $\bu\in (A+U^*+V_{S[\hat{T}]}^*)\cap(B+V^*+U_{S[\hat{T}]}^*)$. Let $m\in\N$ and $c_1,\ldots,c_m\in S$ be such that $\bu=\Phi(c_1,\ldots,c_m)$. But then \cref{app:zvassCorrect1} yields $\#c_1\#c_2\cdots \#c_m\in L(\cW_1)\cap L(\cW_2)\neq\emptyset$.
\end{proof}

\end{appendix}
\fi

\label{afterbibliography}
\newoutputstream{pagestotal}
\openoutputfile{main.pagestotal.ctr}{pagestotal}
\addtostream{pagestotal}{\getpagerefnumber{afterbibliography}}
\closeoutputstream{pagestotal}

\newoutputstream{todos}
\openoutputfile{main.todos.ctr}{todos}
\addtostream{todos}{\arabic{@todonotes@numberoftodonotes}}
\closeoutputstream{todos}

\end{document}